\documentclass[runningheads]{llncs}

\usepackage[T1]{fontenc} 
\usepackage{citesort}
\usepackage{multicol}
\usepackage{amssymb}
\usepackage{amsmath,bbm}
\usepackage{latexsym,epic,eepic}
\usepackage{complexity}
\usepackage{multirow}
\usepackage{url,hyperref}
\usepackage{btran}
\usepackage{paralist}
\usepackage{tikz}
\usetikzlibrary{shapes}
\usetikzlibrary{automata,positioning}

\usepackage{graphicx}                   
\usepackage{hyperref}



\usepackage{listings}

\lstset{
    aboveskip=3mm,%
    belowskip=3mm,%
    fontadjust=true,%
    columns=[c]fixed,%
    keepspaces=true,%
    basewidth={0.58em, 0.53em},%
    tabsize=2,%
    basicstyle=\ttfamily,%
    commentstyle=\itshape,%
    keywordstyle=\color{blue},%
    mathescape=true,%
    escapechar=¤,%
    captionpos=b,%
    framerule=0.3pt,%
    firstnumber=0,%
    numbersep=1.5mm,%
    numberstyle=\tiny,%
    emphstyle=[1]{\color[rgb]{0,0.4,0.4}\ttfamily},%
    morekeywords={\forall,ensures}}

\lstdefinelanguage[JML]{Java}[]{Java}%
       {
        comment=[l]{//\ },
        morecomment=[s]{/*\ }{*/},        
        morecomment=[s]{/**}{*/},
        classoffset=1,
        morekeywords={abrupt_behavior,abrupt_behaviour,
         accessible,accessible_redundantly,also,assert,assert_redundantly,
         assignable,assignable_redundantly,assume,assume_redundantly,
         axiom,behavior,behaviour,breaks,breaks_redundantly,
         callable,callable_redundantly,captures,captures_redundantly,
         choose,choose_if,code,code_bigint_math,code_java_math,
         code_safe_math,constraint,constraint_redundantly,constructor,
         continues,continues_redundantly,decreases,decreases_redundantly,
         decreasing,decreasing_redundantly,diverges,diverges_redundantly,
         duration,duration_redundantly,ensures,ensures_redundantly,
         example,exceptional_behavior,exceptional_behaviour,
         exceptional_example,exsures,exsures_redundantly,extract,field,
         forall,for_example,ghost,helper,hence_by,hence_by_redundantly,
         implies_that,in,in_redundantly,initializer,initially,instance,
         invariant,invariant_redundantly,loop_invariant,
         loop_invariant_redundantly,maintaining,maintaining_redundantly,
         maps,maps_redundantly,measured_by,measured_by_redundantly,method,
         model,model_program,modifiable,modifiable_redundantly,modifies,
         modifies_redundantly,monitored,monitors_for,non_null,
         normal_behavior,normal_behaviour,normal_example,nowarn,
         nullable,nullable_by_default,old,or,post,post_redundantly,
         pre,pre_redundantly,pure,readable,refine,refines,refining,represents,
         represents_redundantly,requires,requires_redundantly,
         returns,returns_redundantly,set,signals,signals_only,
         signals_only_redundantly,signals_redundantly,spec_bigint_math,
         spec_java_math,spec_protected,spec_public,spec_safe_math,
         static_initializer,uninitialized,unreachable,weakly,
         when,when_redundantly,working_space,working_space_redundantly,
         writable
        },
        morekeywords={rep,peer,readonly},
        keywordsprefix=\\,
        otherkeywords={<:,<-,->,..,<==,==>,<==>,<=!=>},
        classoffset=0 
}

\newenvironment{to-do}
{ \rule{1ex}{1ex}\hspace{\stretch{1}} \bfseries}
{ \hspace{\stretch{1}}\rule{1ex}{1ex} \vspace{1ex}}




\newcommand{\defn}[1]{\textit{#1}}

\newcommand{\Aut}{\ensuremath{\mathcal{A}}} 
\newcommand{\Tra}{\ensuremath{\mathcal{T}}} 
\newcommand{\transrel}{\ensuremath{\Delta}} 

\newcommand{\controls}{\ensuremath{Q}} 
\newcommand{\finals}{\ensuremath{F}} 

\newcommand{\params}{\mathcal{X}}

\newcommand{\subseteqf}{{\subseteq_{\text{fin}}}}
\newcommand{\curr}{{\textit{curr}}}


\newcommand{\OMIT}[1]{}

\DeclareMathOperator{\type}{type}

\newcommand{\seqSolver}{{\scshape SeCo}}

\newcommand{\theory}{T}
\newcommand{\Constants}{\mathcal{C}}
\newcommand{\Variables}{\mathcal{V}}

\usepackage{diego-macro}

%
%
%
%
%
%
%
%
%
%
%
%
%
%
\renewcommand{\phi}{\varphi}%

\newcommand{\struct}{\mathfrak{S}}
\newcommand{\Xseq}{\textbf{x}}
\newcommand{\Yseq}{\textbf{y}}
\newcommand{\Zseq}{\textbf{z}}
\newcommand{\Wseq}{\textbf{w}}
\newcommand{\vocab}{\sigma}

\newif\ifextendedtr
\extendedtrfalse

%
%
\usepackage[backgroundcolor=orange!50, textsize=scriptsize,textwidth=1cm]{todonotes}

\newcommand{\sideanthony}[1]{\todo[backgroundcolor=green!20]{{\bf A} #1}}

\definecolor{light-gray}{gray}{0.9}
\definecolor{light-yellow}{RGB}{255, 255, 220}

\usepackage{cite}

\newcommand\shortlong[2]{#1}

\usepackage{color}
\shortlong{}{\pagestyle{plain}}

\title{Decision Procedures for Sequence Theories (Technical Report)\thanks{
A.\ Jeż was supported under National Science Centre, Poland project number
2017/26/E/ST6/00191. A.\ Lin and O.\ Markgraf were supported by the ERC
Consolidator Grant 101089343 (LASD).
P.\ R\"ummer was supported by
    the Swedish Research Council (VR)
    under grant~2018-04727, the Swedish Foundation for Strategic
    Research (SSF) under the project WebSec (Ref.\ RIT17-0011), and the
    Wallenberg project UPDATE.}}

\begin{document}

\author{Artur Je\.z\inst{1}\orcidID{0000-0003-4321-3105} \and
Anthony W.~Lin\inst{2,3}\orcidID{0000-0003-4715-5096} \and
Oliver Markgraf\inst{2}\orcidID{0000-0003-4817-4563} \and
Philipp R\"ummer\inst{4,5}\orcidID{0000-0002-2733-7098}
}

\institute{
University of Wroc\l{}aw, Poland\and
TU Kaiserslautern, Germany\and
Max Planck Institute for Software Systems,\newline Kaiserslautern, Germany\and
University of Regensburg, Germany\and
Uppsala University, Sweden
}

\renewcommand{\thelstlisting}{\arabic{lstlisting}}

\maketitle

\begin{abstract}
    Sequence theories are an extension of theories of strings with an infinite 
alphabet of letters, together with a corresponding alphabet theory (e.g.\ linear
integer arithmetic).
Sequences are natural abstractions of 
extendable arrays, which
permit a wealth of operations including append, map, split, and concatenation.
In spite of the growing amount of tool support for theories of 
sequences by leading SMT-solvers, little is known about the decidability of
sequence theories, which is in stark contrast to the state of the theories of
strings. 
We show that the decidable theory
of strings with concatenation and regular constraints can be extended to the
world of sequences over an alphabet theory that forms a Boolean algebra, while 
preserving decidability. In particular, decidability holds when regular 
constraints are interpreted as parametric automata (which extend both symbolic
automata and variable automata), but fails when interpreted as register 
automata (even over the alphabet theory of equality).
When length constraints are added, the problem is Turing-equivalent to word 
equations with length (and regular) constraints. Similar investigations are 
conducted in the presence of symbolic transducers, which naturally model 
sequence functions
like map, split, filter, \textit{etc}. We have developed a new sequence solver,
{\seqSolver}, based
on parametric automata, and show its efficacy on two classes of benchmarks:
(i)~invariant checking on array-manipulating programs and parameterized systems,
and (ii)~benchmarks on symbolic register automata.

\OMIT{
To 
theory of sequences with 
concatenation with extended with constraints from a decidable alphabet theory that 
forms a boolean algebra is decidable (which includes most existing SMT 
theories). Just as in the case of strings, we study the extension with ``regular
pattern matching''. 
}
%

\end{abstract}

\section{Introduction}
\label{sec:intro}

Sequences are an extension of strings, wherein elements might range 
over an infinite domain (e.g., integers, strings, and even sequences themselves).
Sequences are ubiquitous and commonly used data types in modern 
programming languages. They come under different names, e.g.,
Python/Haskell/Prolog lists, Java ArrayList (and to some extent Streams) and 
JavaScript arrays. Crucially, sequences are \emph{extendable}, and a
plethora
of operations (including append, map, split, filter, concatenation, etc.) can
naturally be defined and are supported by built-in library functions in most
modern programming languages. 

Various techniques in software model checking \cite{rupak-survey} --- including
symbolic execution, invariant generation --- require an appropriate SMT theory,
to which verification conditions could be discharged. In the case of programs
operating on sequences, we would consequently require an SMT theory of
sequences, for which leading SMT solvers like Z3~\cite{Z3,Z3-programming} and cvc5~\cite{CVC5} 
already provide some basic support for over a decade. The basic design 
of sequence theories, as done in Z3 and cvc5, as well
as in other formalisms like symbolic automata~\cite{symbolic-power},
is in fact quite natural. 
That is,
sequence theories can be thought of as extensions of theories of strings with 
an infinite alphabet of letters, together with a corresponding alphabet theory,
e.g.\ Linear Integer Arithmetic (LIA) for reasoning about sequences of integers.
Despite this, very little is known about what is decidable over theories of
sequences.

\OMIT{
This still leaves quite a lot of details to be filled, e.g., what the meaning of
a \emph{regular constraint} is in the setting of sequences.
In the case of string
theories, regular constraints are simply of the form $x \in L(E)$,
where $E$ is a regular expression, mandating that the expression~$E$
matches the string represented by $x$. 
}

In the case of finite alphabets, sequence theories become theories over 
strings, in which a lot of progress has been
made in the last few decades, barring the long-standing open problem of string
equations with length constraints (e.g.\ see \cite{ganesh-word}). For example, it
is known that the existential theory of concatenation over strings with
regular constraints is decidable (in fact, \pspace-complete), e.g., see
\cite{wordequations,plandowskistoc2,diekert,schulz90,Makanin}. Here, a \emph{regular
constraint} takes the form $x \in L(E)$, 
where $E$ is a regular expression, mandating that the expression~$E$
matches the string represented by $x$. 
In
addition, several natural syntactic restrictions --- including 
straight-line, acylicity, and chain-free
(e.g.~\cite{ganesh-word,graph-rational,cav14-string,LB16,popl19,chain-free,priorities}) ---
have been identified, with which string constraints remain decidable in the
presence of more complex string functions (e.g.\ transducers, replace-all,
reverse, etc.).
In the case of infinite alphabets, only a handful of results are available.
Furia~\cite{Furia10} showed that the existential theory of sequence equations 
over the alphabet theory of 
LIA is decidable by a 
reduction to the existential theory of concatenation over strings (over a finite
alphabet) \emph{without regular constraints}. Loosely speaking, a number (e.g.~4)
can be represented as a string in unary (e.g.~1111), and addition is then 
simulated by concatenation. Therefore, his decidability result does not extend to
other data domains and alphabet theories.
Wang et al.~\cite{DBLP:journals/jar/WangA23} define an extension of
the array property fragment~\cite{DBLP:conf/vmcai/BradleyMS06} with
concatenation.  This fragment imposes strong restrictions, however, on
the equations between sequences (here called finite arrays) that can be
considered.

\OMIT{
On the foundational side, the current state of affairs regarding sequence
theories is far from satisfactory:
virtually no decidable fragment is known and very little is known about
undecidable. In the case of finite alphabets, the problem boils
down to theories over strings, on which a lot of progress has been
made in the last few decades, barring the long-standing open problem of string
equations with length constraints (e.g.\ see \cite{ganesh-word}). For example, it
is known that the existential theory of concatenation over strings with
regular constraints is decidable (in fact, \pspace-complete), e.g., see
\cite{wordequations,plandowskistoc2,diekert,schulz90,Makanin}. In
addition, several natural syntactic restrictions --- including 
straight-line, acylicity, and chain-free
(e.g.~\cite{ganesh-word,graph-rational,cav14-string,LB16,popl19,chain-free}) ---
have been identified, with which string constraints remain decidable in the
presence of more complex string functions (e.g.\ transducers, replace-all,
reverse, etc.).
In the case of infinite alphabets, only a handful of results are available.
Furia~\cite{Furia10} showed that the existential theory of sequence equations 
over the alphabet theory of 
LIA is decidable by a 
reduction to the existential theory of concatenation over strings (over a finite
alphabet) \emph{without regular constraints}. Loosely speaking, a number (e.g.~4)
can be represented as a string in unary (e.g.~1111), and addition is then 
simulated by concatenation. Therefore, his decidability result does not extend to
other data domains and alphabet theories.
Wang et al.~\cite{DBLP:journals/jar/WangA23} define an extension of
the array property fragment~\cite{DBLP:conf/vmcai/BradleyMS06} with
concatenation.  This fragment imposes strong restrictions, however, on
the equations between sequences (here called finite arrays) that can be
considered.
In addition, there is so far
no consistent definition of the notion of a
``regular language'' over infinite alphabets
(e.g.\ see \cite{atom-book}).
}


\paragraph{``Regular constraints'' over sequences.} 
One answer of
what a regular constraint is over sequences is provided by \emph{automata 
modulo theories}.
\OMIT{
In the case of infinite alphabet, there is so far also no consistent 
definition of the notion of a ``regular language'' over infinite alphabets
(e.g.\ see \cite{atom-book}). This yields another question of what definitions
of regular constraints permit decidable sequence theories. 
}
Automata modulo 
theories \cite{DV21,symbolic-power} are an elegant framework that can be
used to capture the 
notion of regular constraints over sequences:
Fix an alphabet theory $\theory$ 
that forms a Boolean algebra; this is satisfied by virtually all existing SMT 
theories. 
In this framework, one uses formulas in $\theory$
to capture multiple (possibly infinitely many) transitions of an automaton.
More precisely, 
between two states in a \emph{symbolic automaton} one associates a unary\footnote{This can be generalized to any
arity, which has to be set uniformly for the automaton.} formula 
$\varphi(x) \in \theory$. For example, $q \to_{\varphi} q'$ with $\varphi := x
\equiv 0 \pmod{2}$ over 
LIA corresponds to all
transitions $q \to_i q'$ with any even number~$i$. Despite their nice
properties, it is known that many simple languages
cannot be captured using symbolic automata; e.g., one cannot express the
language consisting of sequences containing the same even number~$i$
\emph{throughout} the sequence. 

There are essentially two (expressively incomparable) extensions of symbolic 
automata that address the aforementioned problem:
\begin{inparaenum}[(i)]
\item 
Symbolic
Register Automata (SRA) \cite{SRA} and
\item Parametric Automata (PA)
\cite{FK20,FL22,FJL22}.
\end{inparaenum}
The model SRA
was obtained by combining register automata \cite{KF94} and symbolic automata.
The model PA extends symbolic automata by allowing \emph{free variables} (a.k.a.
\emph{parameters}) in the transition guards, i.e., the guard will be of the form
$\varphi(x,\bar p)$, for parameters $\bar p$. In an accepting path of PA, a
parameter~$p$ used in multiple transitions has to be instantiated
with the same value, which enables comparisons of different positions
in an input sequence. For example, we can assert that only sequences of the form
$i^*$, for an even number $i$, are accepted by the PA with a single transition
$q \to_\varphi q$ with $\varphi(x,p) := x = p \wedge x \equiv 0 \pmod{2}$ and
$q$ being the start and final state. PA can also be construed as an extension of
both variable automata \cite{GKS10} and symbolic automata. SRA and PA are not
comparable: while parameters can be construed as read-only registers, 
SRA can only compare two different positions using equality, while PA may
use a general formula in the theory in such a comparison (e.g., order).

\paragraph{Contributions.} The main contribution of this paper is to provide
\emph{the
first decidable fragments of a theory of sequences parameterized in the element
theory}. In particular, we show how to leverage string solvers to solve theories
over sequences. We believe this is especially interesting, in view of the
plethora of existing string solvers developed in the last 10 years
(e.g. see the survey \cite{string-survey}). This opens up new possibilities for
verification tasks to be automated; in particular, we show how verification
conditions for Quicksort, as well as Bakery and Dijkstra protocols, can be
captured in our sequence theory. This formalization was done in the style of
\emph{regular model checking} \cite{RMC,RMC-revisited}, whose extension to 
infinite alphabets has been a longstanding challenge in the field.
We also provide a new (dedicated) sequence solver {\seqSolver}
We detail our results below.

\OMIT{
a first
systematic investigation of the decidability and complexity of sequence
theories, in particular, first decidability results.
}

We first show that the quantifier-free theory of sequences with concatenation
and PA as regular constraints is decidable.
Assuming that the theory is solvable
in \pspace (which is reasonable for most SMT theories), we show that our
algorithm runs in \expspace (i.e., double-exponential time and exponential space).
We also identify conditions on the SMT theory $\theory$
under which
\pspace can be achieved
and as an example show that Linear Real Arithmetic (LRA) satisfies those conditions.
This matches the \pspace-completeness
of the theory of strings with concatenation and regular
constraints~\cite{diekertfreegroups}.

We consider three different variants/extensions:

\begin{enumerate}[(i)]
\item \emph{Add length constraints}.
Length constraints (e.g., $|\Xseq| = |\Yseq|$ for two
sequence variables $\Xseq,\Yseq$) are often considered in the context of string
theories, but the decidability of the resulting theory (i.e., strings with
concatenation and length constraints) is still a long-standing open problem
\cite{ganesh-word}. We show that the case for sequences is Turing-equivalent to
the string case. 
\item \emph{Use SRA instead of PA}. We show that the resulting theory of
sequences is undecidable, even over the alphabet theory $\theory$ of equality.
\item \emph{Add symbolic transducers}. Symbolic transducers 
\cite{DV21,symbolic-power} extend finite-state input/output
transducers in the same way that symbolic automata extend finite-state automata.
To obtain decidability, we consider formulas satisfying the straight-line
restriction that was 
defined over strings theories~\cite{LB16}. We show that the resulting theory is
decidable in 2-{\exptime} and is \expspace-hard, if $\theory$ is solvable in \pspace.
\end{enumerate}

We have implemented the solver {\seqSolver} based on our algorithms, and
demonstrated its efficacy on two classes of benchmarks:
(i) invariant checking on array-manipulating programs and parameterized systems,
and (ii) benchmarks on Symbolic Register Automata (SRA) from \cite{SRA}.
For the first benchmarks, we model as sequence constraints invariants for
QuickSort, Dijkstra's Self-Stabilizing Protocol \cite{dijkstra74} and Lamport's
Bakery Algorithm \cite{lamport74}.
For (ii), we solve decision problems for SRA on benchmarks of \cite{SRA} 
such as emptiness, equivalence and inclusion on regular expressions with 
back-references.
We report promising experimental results: our solver 
{\seqSolver} is up to three orders of magnitude faster than the SRA solver in
\cite{SRA}.

\paragraph{Organization.}
We provide a motivating example of sequence theories in Section \ref{sec:mot_ex}.
Section \ref{sec:model} contains the syntax and semantics 
of the sequence constraint language, as well as some basic algorithmic results. 
We deal with equational and regular constraints in Section
\ref{sec:we}. In Section \ref{sec:sl}, we deal with the decidable fragments with
equational constraints, regular constraints, and transducers. We deal
with extensions of these languages with length and SRA constraints in Section
\ref{sec:ext}. In Section \ref{sec:impl} we report our implementation and
experimental results. We conclude in Section \ref{sec:concl}. Missing
details and proofs can be found in the full version.

\section{Motivating Example}
\label{sec:mot_ex}

\begin{lstlisting}[language={[JML]Java},frame=single,float=tbp,label=lst:quicksort,captionpos=b,caption={Implementation of QuickSort with Java Streams.}]
/*@
 * ensures \forall int i; \result.contains(i) == l.contains(i);
 */
public static List<Integer> quickSort(List<Integer> l) {
  if (l.size() < 1) return l;
  Integer p = l.get(0);
  List<Integer> left   = l.stream().filter(i -> i < p)
                          .collect(Collectors.toList());
  List<Integer> right  = l.stream().skip(1).filter(i -> i >= p)
                          .collect(Collectors.toList());
  List<Integer> result = quickSort(left);
  result.add(p); result.addAll(quickSort(right));
  return result;
}
\end{lstlisting}

We illustrate the use of sequence theories in verification using a
implementation of QuickSort~\cite{DBLP:journals/cj/Hoare62}, shown in
Listing~\ref{lst:quicksort}. The example uses the Java Streams API and
resembles typical implementations of QuickSort in functional
languages; the program uses high-level operations on
streams and lists like \emph{filter} and \emph{concatenation}.
As we show, the data types and operations can naturally be modelled
using a theory of sequences over integer arithmetic, and our
results imply decidability of checks that would be done by a verification
system.

The function \lstinline!quickSort! processes a given
list~\lstinline!l! by picking the first element as the
pivot~\lstinline!p!, then creating two sub-lists \lstinline!left!,
\lstinline!right! in which all numbers \lstinline!$\geq$p! (resp.,
\lstinline!$<$p!) have been eliminated. The function
\lstinline!quickSort!  is then recursively invoked on the two
sub-lists, and the results are finally concatenated and returned.

We focus on
the verification of the post-condition shown in the beginning of
Listing~\ref{lst:quicksort}: sorting does not change the set of
elements contained in the input list. This is a weaker form of the
permutation property of sorting algorithms, and as such known to be
challenging for verification methods (e.g.,
\cite{DBLP:conf/ifm/SafariH20}). Sortedness of the result list can be
stated and verified in a similar way, but is not considered here.
Following the classical
design-by-contract approach~\cite{DBLP:journals/computer/Meyer92}, to
verify the partial correctness of the function it is enough to show
that the post-condition is established in any top-level call of the
function, assuming that the post-condition holds for all recursive
calls. For the case of non-empty lists, the verification condition,
expressed in our logic, is:
\begin{multline*}
  \left(
    \begin{array}{@{}l@{}}
      \mathbf{left} = T_{<\mathbf{l}_0}(\mathbf{l}) \wedge \mathbf{right} = T_{\geq \mathbf{l}_0}(\mathit{skip}_1(\mathbf{l})) \wedge\mbox{}
      \\
      \forall i.\, (i \in \mathbf{left} \leftrightarrow i \in \mathbf{left}')
      \wedge \forall i.\, (i \in \mathbf{right} \leftrightarrow i \in \mathbf{right}') \wedge \mbox{}
      \\
      \mathbf{res} = \mathbf{left}' \,.\, [\mathbf{l}_0] \,.\, \mathbf{right}'
    \end{array}
  \right)
  \\
  \to \forall i.\, (i \in \mathbf{l} \leftrightarrow i \in \mathbf{res})
\end{multline*}

The
variables~$\mathbf{l}, \mathbf{res}, \mathbf{left}, \mathbf{right},
\mathbf{left}', \mathbf{right}'$ range over sequences of integers,
while $i$ is a bound integer variable.
The formula uses several operators that a useful sequence theory has
to provide:
\begin{inparaenum}[(i)]
\item $\mathbf{l}_0$: the first element of input list~$\mathbf{l}$;
\item $\in$ and $\not\in$: membership and non-membership of an integer in a
  list, which
  can be expressed using symbolic parametric automata;
\item $\mathit{skip}_1$, $T_{<\mathbf{l}_0}$, $T_{\geq \mathbf{l}_0}$:
  sequence-to-sequence functions, which can be represented using
  symbolic parametric transducers;
\item $\cdot \,.\, \cdot$: concatenation of several sequences.
\end{inparaenum}
The formula otherwise is a direct model of the method in
Listing~\ref{lst:quicksort}; the
variables~$\mathbf{left}', \mathbf{right}'$ are the results of the
recursive calls, and concatenated to obtain the result sequence.

In addition, the formula contains quantifiers. To demonstrate validity
of the formula, it is enough to eliminate the last
quantifier~$\forall i$ by instantiating with a Skolem symbol~$k$, and
then instantiate the other quantifiers (left of the implication) with
the same $k$:
\begin{equation*}
  \left(
    \begin{array}{@{}l@{}}
      \mathbf{left} = T_{<\mathbf{l}_0}(\mathbf{l}) \wedge \mathbf{right} = T_{\geq \mathbf{l}_0}(\mathit{skip}_1(\mathbf{l})) \wedge\mbox{}
      \\
      (k \in \mathbf{left} \leftrightarrow k \in \mathbf{left}')
      \wedge (k \in \mathbf{right} \leftrightarrow k \in \mathbf{right}') \wedge \mbox{}
      \\
      \mathbf{res} = \mathbf{left}' \,.\, [\mathbf{l}_0] \,.\, \mathbf{right}'
    \end{array}
  \right)
  \to (k \in \mathbf{l} \leftrightarrow k \in \mathbf{res})
\end{equation*}
\OMIT{
\begin{equation*}
  \left(
    \begin{array}{@{}l@{}}
        (k \in \mathbf{left} \rightarrow k \in \mathbf{l} \wedge k < \mathbf{l}) 
        \wedge (k \in \mathbf{right} \rightarrow k \in \mathbf{l} \wedge k >
        \mathbf{l}) \wedge\mbox{}
      \\
      (k \in \mathbf{left} \leftarrow k \in \mathbf{left}')
      \wedge (k \in \mathbf{right} \leftarrow k \in \mathbf{right}') \wedge \mbox{}
      \\
      \mathbf{res} = \mathbf{left}' \,.\, [\mathbf{l}_0] \,.\, \mathbf{right}'
    \end{array}
  \right)
  \to (k \in \mathbf{l} \leftarrow k \in \mathbf{res})
\end{equation*}
}
As one of the results of this paper, we prove that this final formula
is in a decidable logic. The formula can be rewritten to a disjunction
of straight-line formulas, and shown to be valid using the decision
procedure presented in Section~\ref{sec:sl}.


\section{Models}
\label{sec:model}

In this section, we will define our sequence constraint language, and prove
some basic results regarding various constraints in the language. The definition
is a natural generalization of string constraints (e.g. see
\cite{ganesh-word,diekert,LB16,popl19,wordequations}) by employing an alphabet
theory (a.k.a.\ element theory), as is done in symbolic automata and automata modulo theories
\cite{DV21,symbolic-power,veanes12}. 

For simplicity, our definitions will follow a 
model-theoretic approach. Let $\vocab$ be a vocabulary. We fix a 
$\vocab$-structure $\struct = (D; I)$, where $D$ can be a finite or an infinite 
set (i.e., the universe) and $I$
maps each function/relation symbol in $\vocab$ to a function/relation over
$D$. The elements of our sequences will range over $D$.
We assume that the quantifier-free theory $T_{\struct}$ over $\struct$ (including
equality) is decidable. Examples of such $T_{\struct}$ are abound from
SMT, e.g.,
LRA and LIA.
We write $T$ instead of $T_{\struct}$, when $\struct$ is clear. 
Our quantifier-free formula will use \emph{uninterpreted
$T$-constants}
$a,b,c,\ldots$, and may also use variables $x,y,z,\ldots$.
(The distinction between uninterpreted constants and variables is made only
for the purpose of presentation of sequence constraints, as will be clear
shortly.)
We use $\Constants$ to denote the set of all uninterpreted $T$-constants.
A formula $\varphi$ is satisfiable if there is an assignment that maps the
uninterpreted constants and variables to concrete values in $D$ such that the 
formula becomes true in $\struct$.
\OMIT{
For any tuple $\bar x$ of variables, we write 
$T_{\struct}(\bar x)$ to mean the set of formulas in $T_{\struct}$ that use only
variables in $\bar x$.
}

Next, we define how we lift $T$ to sequence constraints, using $T$ as the
\emph{alphabet theory} (a.k.a. \emph{element theory}). As in the case of
strings (over a finite alphabet), we use standard notation like $D^*$ to refer
to the set of all sequences over $D$. By default, elements of $D^*$ are written
as standard in mathematics, e.g., $7,8,100$, when $D = \Z$. Sometimes we will 
disambiguate them by using brackets, e.g., $(7,8,100)$ or $[7,8,100]$.
We will use the symbol $s$ (with/without
subscript) to refer to concrete sequences (i.e., a member of $D^*$). We will use
$\Xseq,\Yseq,\Zseq$ to refer to $T$-sequence variables.
Let $\Variables$ denote the set of all $T$-sequence variables, and
$\Gamma := \Constants \cup D$.
We will define 
constraint languages syntactically at the beginning, and will instantiate them to
specific sequence operations. The theory $\theory^*$ of $\theory$-sequences 
consists of the following constraints:
\[
    \varphi ::= R(\Xseq_1,\ldots,\Xseq_r) \ |\ \varphi \wedge \varphi
\]
where $R$ is an $r$-ary relation symbol. In our definition of each atom $R$
below, we will specify if an assignment $\mu$, which maps each $\Xseq_i$ to a
$T$-sequence and each uninterpreted constant to a $T$-element, satisfies $R$.
If $\mu$ satisfies all atoms, we say that $\mu$ is a \emph{solution}
and the \emph{satisfiability problem} is to decide whether there is a solution
for a given $\varphi$.

A few remarks about the missing boolean operators in the constraint language
above are in order.
Disjunctions can be handled easily using the DPLL(T) framework (e.g.\ see
\cite{KS08}), so we have kept our theory conjunctive. As in the case of 
strings, negations are usually handled separately because they can sometimes 
(but not in all cases) be eliminated while preserving decidability. 
\OMIT{
We will next define different ways of instantiating relations over
sequences. The semantics is obvious: a solution $\mu$ is a function that maps
each uninterpreted $T$-constant $a$ to a $T$-element and each sequence 
variable $\Xseq$ to a $T$-sequence that is consistent with the interpretation of
uninterpreted $T$-constants, in such a way that $\varphi$ becomes true (here, 
$\wedge$
will be interpreted as an intersection in the standard way). To simplify our
presentation, 
we assume that \emph{each uninterpreted $T$-constant gets assigned different
$T$-elements by $\mu$}, i.e., $\mu(a) \neq \mu(b)$.
}

\paragraph{Equational constraints.} 
A \defn{$T$-sequence equation} is of the form
\[
    L = R
\]
where each of $L$ and $R$ is a concatenation of concrete $T$-elements,
uninterpreted 
constants, and $T$-sequence variables. That is, if $\Theta := \Gamma \cup
\Variables$,
then $L,R \in \Theta^*$.
\OMIT{
For example, when $T$
refers to LRA, in the constraint
\[
    0.x = x.0
\]
the solutions are exactly $x \mapsto 0$. 
}
For example, in the equation
\[
    0. 1. \Xseq = \Xseq. 0.1
\]
the set of all solutions is of the form $\Xseq \mapsto (01)^*$. 
\OMIT{
To make this more formal, given a sequence equation $E$ with variables
$\Xseq_1,\ldots,\Xseq_r$, an assignment $\nu$ is a mapping from each
$T$-sequence variable $\Xseq$ to an element of $\Gamma^*$ and
each $a \in \Constants$ to an element of $D$. 
}
To make this more formal, we extend each assignment $\mu$
to a homomorphism on $\Theta^*$. We write $\mu \models L = R$ if $\mu(L) = 
\mu(R)$. Notice that this definition is just direct extension of 
that of \emph{word equations} (e.g. see \cite{diekert}), i.e., when the domain
$D$ is finite.

In most cases the inequality constraints $L \neq R$ can be reduced to equality
in our case this requires also element constraints, described below.

\OMIT{
The generated
$r$-ary relation consists of all $r$-tuples of the form
$(\nu(\Xseq_1),\ldots,\nu(\Xseq_r))$, where $\nu$ is a solution to $E$.
}

\paragraph{Element constraints.} We allow $T$-formulas to constrain the
uninterpreted constants. More precisely, given a $T$-sentence (i.e., no free
variables) $\varphi$ that uses $\Constants$ as uninterpreted constants, we 
obtain a proposition $P$ (i.e., 0-ary relation) 
that $\mu \models P$ iff
$T \models_{\mu} \varphi$. 

Negations in the equational constraints can be removed just like in the
case of strings, i.e., by means of additional variables/constants and element
constraints. For example, $\Xseq \neq \Yseq$ can be replaced by $(\Xseq =
\Zseq a \Xseq' \wedge \Yseq = \Zseq b\Yseq' \wedge a \neq b) \vee
\Xseq = \Yseq a\Zseq \vee \Xseq a \Zseq = \Yseq$.
Notice that 
$a \neq b$ is a $T$-formula because we assume the equality symbol in $T$.

\paragraph{Regular constraints.}
Over strings, regular constraints are simply unary constraints $U( \Xseq )$, where 
$U$ is an automaton. The interpretation is $\Xseq$ is in the language of $U$. 
We define an analogue of regular constraints over sequences using 
\defn{parametric automata} \cite{FK20,FJL22,FL22}, which 
generalize both symbolic automata \cite{DV21,symbolic-power} and 
variable automata \cite{GKS10}.

A \defn{parametric automaton} (PA) over $T$ is
of the form $\Aut = (\params,\controls,\transrel,q_0,\finals)$, where
$\params$ is a finite set of parameters, $\controls$ is a finite set of
control states, $q_0 \in \controls$ is the initial state, $\finals \subseteq
\controls$ is the set of final states, and
$\transrel \subseteqf \controls \times T(\curr,\params) \times \controls$.
Here, \emph{parameters} are simply uninterpreted $T$-constants, i.e., 
$\params \subseteq \Constants$.
Formulas that appear in transitions in $\transrel$ will be referred to as
\emph{guards}, since they restrict which transitions are enabled at a given state.
Note that $\curr$ is an uninterpreted constant
that refers to the ``current'' position in the sequence.
The semantics is quite simply defined:
a sequence $(d_1, d_2, \ldots, d_n)$ is in the language of $\Aut$ under the assignment of parameters $\mu$,
written as $(d_1, \ldots, d_n) \in L_\mu(\Aut)$, when
there is a sequence of $\transrel$-transitions
\[
(q_0,\varphi_1(\curr, \params),q_1),
(q_1,\varphi_2(\curr, \params),q_2),
\ldots,
(q_{n-1},\varphi_n(\curr, \params),q_n),
\]
such that $q_n \in \finals$ and $T \models \varphi_i(d_i, \mu(\params))$.
Finally, for a regular constraint $\Aut(\Xseq)$ is satisfied by $\mu$, when
$\mu(\Xseq) \in L_\mu(\Aut)$.

Note, that it is possible to complement a PA $\Aut$,
one has to be careful with the semantics:
we treat $\Aut$ as a symbolic automaton,
which are closed under boolean operations~\cite{symbolic-power}.
So we are looking for $\mu$ such that $\mu(\Xseq) \in \overline{L_\mu(\Xseq)}$.
What we cannot do using the complementation,
is a universal quantification over the parameters;
note that already theory of strings with universal and existential quantifiers is undecidable.

We state next a
lemma showing that PAs using only ``local'' parameters, together
with equational constraints, can encode the constraint language that we have
defined so far.
\OMIT{
For any instantiation
$\mu: \params \to D$, we obtain a symbolic automaton $\Aut_{\mu}$ and so we can
use the definition of a symbolic automaton to define the semantics of a
run. Namely, $L(\Aut) = \bigcup_{\mu: \params \to D} L(\Aut_{\mu})$. A regular
constraint is then of the form $x \in L(\Aut)$, whose semantics is standard: 
we require $\nu(\Xseq) \in L(\Aut)$, for any given instantiation $\nu$ of $\Xseq$.
\sideanthony{Need to remove references to symbolic automata in the defn.}
}

\begin{lemma}
    Satisfiability of sequence constraints with equation, element, and regular 
    constraints can be reduced in polynomial-time to satisfiability of sequence
    constraints with equation and regular constraints (i.e., without element
    constraints). Furthermore, it can be assumed that no two regular constraints
    share any parameter.
    \label{lm:reduce}
\end{lemma}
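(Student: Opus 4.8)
The plan is to carry out the reduction in two successive polynomial-time passes, each preserving satisfiability: first eliminate the element constraints by turning them into regular constraints, then rename parameters so that no two regular constraints share one.

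\emph{Pass 1 (removing element constraints).} Given an element constraint, i.e.\ a $T$-sentence $\varphi$ over uninterpreted constants $a_1,\dots,a_k \in \Constants$, I would introduce a fresh sequence variable $\Xseq_\varphi$ together with the parametric automaton $\Aut_\varphi = (\{a_1,\dots,a_k\}, \{q_0,q_f\}, \{(q_0,\varphi,q_f)\}, q_0, \{q_f\})$, whose single guard is $\varphi$ itself; this is legal since a sentence is in particular a formula over $\curr$ and the parameters, even though $\curr$ does not occur in it. Because there is no accepting run of length $0$ (as $q_0$ is not final) and none of length $\ge 2$ (as $q_f$ has no outgoing transition), one checks that for every parameter assignment $\mu$ the language $L_\mu(\Aut_\varphi)$ is the set of all length-one sequences if $T\models_\mu\varphi$, and is empty otherwise. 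Hence replacing the atom $\varphi$ by the regular constraint $\Aut_\varphi(\Xseq_\varphi)$ preserves satisfiability: a solution $\mu$ of the original formula is extended by letting $\mu(\Xseq_\varphi)$ be any length-one sequence, while any solution of the new formula must make $L_\mu(\Aut_\varphi)$ nonempty and hence satisfies $\varphi$; since $\Xseq_\varphi$ is fresh, no other atom is affected. Doing this for all element constraints gives, in polynomial time, an equisatisfiable formula with only equational and regular constraints.

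\emph{Pass 2 (separating parameters).} Let $\Aut_1(\Xseq_1),\dots,\Aut_m(\Xseq_m)$ be the regular constraints of the formula produced by Pass 1. For each $j$ and each parameter $p$ of $\Aut_j$ I would pick a pairwise distinct fresh constant $p^{(j)} \in \Constants$ and let $\Aut_j'$ be $\Aut_j$ with every occurrence of $p$ in the guards renamed to $p^{(j)}$; by freshness no two of the $\Aut_j'$ share a parameter. To compensate semantically I would add, for each such $j$ and $p$, the equational constraint $p^{(j)} = p$ — an equation between two single-letter words over $\Theta$, which forces $\mu(p^{(j)}) = \mu(p)$. Whenever all these equalities hold, each guard of $\Aut_j'$ evaluates under $\mu$ exactly as the corresponding guard of $\Aut_j$, so $L_\mu(\Aut_j') = L_\mu(\Aut_j)$; this yields equisatisfiability in both directions (in the forward direction one first extends a solution by $\mu(p^{(j)}) := \mu(p)$). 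This pass is again polynomial, and it is compatible with Pass 1 because the automata $\Aut_\varphi$ produced there are simply among the $\Aut_j$ processed here. Combining the two passes proves the lemma.

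The construction itself is routine; the only points that need care — and hence the closest thing to an obstacle — are checking that a guard in which $\curr$ is absent is still a well-formed PA guard, that the two-state automaton $\Aut_\varphi$ has a nonempty language precisely when its defining sentence holds (which hinges on ruling out runs of length $0$ and length $\ge 2$), and that equality of two uninterpreted $T$-constants is genuinely expressible as an equational constraint between length-one words. None of these is a real mathematical difficulty, so I expect the full proof to be short.
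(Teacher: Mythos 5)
Your proof is correct, and your Pass~1 is essentially the paper's construction in streamlined form: the paper builds a $(k+1)$-state automaton whose $i$-th guard is $\curr = a_{i+1} \wedge \varphi$, so that the fresh sequence variable must spell out the parameter values, whereas you collapse this to a two-state automaton guarded by the sentence $\varphi$ alone; both work, and yours is arguably cleaner. Pass~2 is where you genuinely diverge. The paper localizes a shared parameter $p$ by making it \emph{visible as data}: each automaton gets a fresh parameter $p'$ and is modified to accept $p'.w$, the constraint becomes $\Yseq.\Xseq \in L(\Aut')$ with $|\Yseq|=1$, and the identification across automata is then carried by sequence variables (or by element constraints that are fed back through Pass~1). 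You instead keep the identification at the level of uninterpreted constants, adding the word equation $p^{(j)} = p$; this is legal, since equational constraints range over $\Theta^* = (\Constants \cup D \cup \Variables)^*$ and $\params \subseteq \Constants$, and it gives equisatisfiability immediately. The trade-off is that your route is shorter and avoids the slightly fiddly prefix automata, but it leaves the (renamed) parameters occurring inside equational constraints, which the downstream development in Section~\ref{sec:we} is not literally set up for: the atom set $\Phi$ there only adds formulas $x = c$ for \emph{concrete} constants $c \in D$ occurring in equations, so to make the finite-alphabet reduction preserve your new equations one would additionally have to include the atoms $x = p$ and $x = p^{(j)}$ (so that having the same type forces actual equality). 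The paper's visibility trick keeps parameters out of the equations entirely and therefore composes with Section~\ref{sec:we} unchanged. As a proof of the lemma as stated, yours is fine.
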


\begin{proposition}
    Assume that $T$ is solvable in \np (resp.\ \pspace). Then, deciding
    nonemptiness of a parametric automaton over $T$ is in \np (resp.\ \pspace).
\end{proposition}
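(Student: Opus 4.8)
The plan is to show that nonemptiness reduces to a single satisfiability test for a polynomial-size quantifier-free $T$-formula, after normalizing accepting runs to short ones.

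First I would argue that $\Aut$ is nonempty for some parameter assignment iff $q_0 \in \finals$, or there is an accepting run whose underlying path in the control graph is \emph{simple}, i.e.\ visits no control state twice (and hence has length at most $|\controls|-1$). The reason is cycle excision: given any accepting run with parameter assignment $\mu$ and position values $d_1,\dots,d_n$, if some control state occurs twice we delete the loop between its two occurrences together with the corresponding $d_i$'s; what remains is still a legal path of $\Aut$ ending in a final state, and each surviving guard is still satisfied by $\mu$ and the surviving $d_i$'s, because the guards mention only $\curr$ and the parameters and we have merely dropped conjuncts from a satisfiable conjunction. Iterating yields a simple accepting path.

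Next, fix a candidate simple path $q_0 = p_0 \tran{\varphi_1} p_1 \tran{\varphi_2} \cdots \tran{\varphi_k} p_k$ with $p_k \in \finals$ and $k < |\controls|$. I claim it supports an accepting run for some $\mu$ exactly when the quantifier-free $T$-formula $\Phi := \bigwedge_{i=1}^k \varphi_i[\curr \mapsto z_i]$ is satisfiable, where $z_1,\dots,z_k$ are fresh, pairwise distinct copies of the position constant $\curr$ and the parameters of $\Aut$ are kept as the shared uninterpreted constants they already are: a model of $\Phi$ provides both the parameter assignment $\mu$ and the input letters $d_i := z_i$, and conversely. The point is that $\Phi$ is quantifier-free (each guard is) and of size polynomial in $\Aut$ — a conjunction of at most $|\controls|-1$ guards taken from the input.

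The complexity bounds now follow. For \np: nondeterministically guess a simple path from $q_0$ to a final state (a certificate of polynomial size, since its length is below $|\controls|$), build $\Phi$ in polynomial time, and accept iff $\Phi$ is $T$-satisfiable — an \np\ query by hypothesis — or if $q_0 \in \finals$. For \pspace: since $|\controls|$ is polynomial, enumerate all simple $q_0$-to-$\finals$ paths by depth-first search in polynomial space, testing $T$-satisfiability of $\Phi$ in \pspace\ for each and reusing the space across iterations (or simply invoke $\textup{NPSPACE} = \pspace$ and keep the guessing procedure). The only step that requires care is the cycle-excision argument, where one must verify that shortcutting remains sound despite parameters being shared across transitions; this holds precisely because the same $\mu$ is reused throughout and deleting conjuncts cannot turn a satisfiable $T$-formula unsatisfiable — with that observation the remainder is routine.
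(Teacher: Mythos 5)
Your proposal is correct and takes essentially the same route as the paper: guess a simple accepting path (soundness justified by cycle excision) and check $T$-satisfiability of the conjunction of its guards with the position constant decoupled per transition, which is exactly the paper's formula $\exists \params.\, \bigwedge_{i=1}^k \exists \curr.\,\psi_i(\curr,\params)$ written in quantifier-free form with fresh constants. You merely make explicit the details the paper labels as standard — the $q_0 \in \finals$ edge case, why loop removal preserves satisfaction of the surviving guards under the shared parameter assignment, and the $\textup{NPSPACE}=\pspace$ step.
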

The proof is standard (e.g.\ see~\cite{FK20,FJL22,FL22}), and only sketched
here. The algorithm first nondeterministically guesses a simple path in
the automaton $\Aut$ from an initial state $q_0$ to some final state $q_F$. 
Let us say that the guards appearing in this path are $\psi_1(\curr,\params),
\ldots,\psi_k(\curr,\params)$. We need to check if this path is realizable by
checking $T$-satisfiability of 
\[
    \exists \params.\, \bigwedge_{i=1}^k \exists \curr.\,(\psi_i(\curr, \params)).
\]
It is easy to see that this is an \np (resp.\ NPSPACE =
\pspace) procedure.

\paragraph{Parametric transducers.}
We define a suitable extension of symbolic transducers over parameters following the definition from Veanes et al.~\cite{veanes12}.
A \emph{transducer constraint} is of the form $\Yseq = \Tra(\Xseq)$, for a
parametric transducer $\Tra$.
A \emph{parametric transducer} over $T$ is of the form 
$\Tra = (\params,\controls,\transrel,q_0,\finals)$, where $\params$,
$\controls$, $q_0$, $\finals$ are just like in parametric automata.
Unlike parametric automata, $\transrel$ is a finite set of tuples of the form
$(p,(\varphi,\Wseq),q)$, where $(p,\varphi,q)$ is a standard transition in
parametric automaton, and $\Wseq$ is a (possibly empty) sequence of $T$-terms 
over variable $curr$ and constants $\params$, e.g., $\Wseq = (curr+7,curr+2)$.
One can think of $\Wseq$ as the output produced by the transition.
Given an assignment $\mu$ of parameters and the sequence variables, the
constraint $\Yseq = \Tra(\Xseq)$ is satisfied when there is a sequence of
$\transrel$-transitions
\[
    (q_0,\varphi_1(\curr, \params),\Wseq_1,q_1),
    (q_1,\varphi_2(\curr, \params),\Wseq_2,q_2),
    \ldots
    (q_{n-1},\varphi_n(\curr, \params),\Wseq_n,q_n),
\]
such that $q_n \in \finals$ and $T \models \varphi_i(d_i, \mu(\params))$,
where $\mu(\Xseq) = (d_1,\ldots,d_n)$, and finally 
\[
    \mu(\Yseq) = \mu_1(\Wseq_1) \cdots \mu_n(\Wseq_n)
\]
where $\mu_i$ is $\mu$ but maps $\curr$ to $d_i$. The definition assumes that
$\mu_i$ is extended to terms and
concatenation thereof by homomorphism,
e.g., in LRA, if $\Wseq_1 = (\curr + 7,\curr + 2)$
and $\mu_1$ maps $\curr$ to 10, then $\Wseq_1$ will get mapped to $17,12$.
Given a set $S \subseteq D^*$ and an assignment $\mu$ (mapping the constants
to $D$), we define the \emph{pre-image}
$\Tra_{\mu}^{-1}(S)$ of $S$ under $\Tra$ with respect to $\mu$ as the set of 
sequences $\Wseq \in D^*$ such that $\Wseq' = \Tra(\Wseq)$ holds with respect to
$\mu$. 

\OMIT{
Finally, we define an extension of a rational transducer to sequences. To this
end, we simply just follow symbolic transducers \textit{a la.}\ Veanes since this seems to
serve most intents and purposes. \sideanthony{If required, we could also extend this
to allow parameters as well.}
}

\section{Solving Equational and Regular Constraints}
\label{sec:we}

Here we present results on solving equational constraints, together with
regular constraints, by a reduction to the string case, for which a wealth of
results are already available. In general, this reduction causes an exponential
blow-up in the resulting string constraint, which we show to be unavoidable in
general. That said, we also provide a more refined analysis in the case when the
underlying theory is LRA, where we can avoid this exponential blow-up.

%
\subsubsection*{Prelude: The case of strings}
\OMIT{
Given a finite alphabet $\Gamma$ and a finite set of variables $\mathcal V$ a system of word equations 
is given as a set of equations $\{L_i = R_i\}_{i \in I}$, where $L_i, R_i \in (\Gamma \cup \mathcal V)^*$
and constraints of the form $x \in L(A)$, where $x \in \mathcal V$ is a variable and $L(A)$
is language of an NFA $A$ (other specification of regular language can also be given).
Such system is satisfiable, when there is a substitution $\mu: \mathcal V \to \Gamma^*$
which turns each formal equality $U_i = V_i$ into a true equality of strings and for each regular constraint $x \in L(A)$
indeed $\mu(x) \in L(A)$.
}
We start with some known results about the case of strings.
The satisfiability of word equations with regular constraints is \pspace-complete~\cite{diekertfreegroups,wordeqgroups}. This upper bound can be
extended to full quantifier-free theory~\cite{buchi}.
When no regular constraints are given, the problem is only known to be \np-hard,
and it is widely believed to be in \np. 
In the absence of regular constraints, without loss of generality $\Gamma$ can be assumed to contain only letters from the equations;
this is not the case in presence of regular constraints.
The algorithm solving word equations~\cite{wordeqgroups} does not need an
explicit access to $\Gamma$:
it is enough to know whether there is a letter which labels a given set of transitions in the NFAs used in the regular constraints.
In principle, there could be exponentially many different (i.e., inducing different transitions in the NFAs) letters.
When oracle access to such alphabet is provided, the satisfiability can still be decided in \pspace:
while not explicitly claimed, this is exactly the scenario in \cite[Sect.~5.2]{wordeqgroups}

Other constraints are also considered for word equations;
perhaps the most widely known are the length constraints, which are of the form:
$\sum_{x \in \mathcal V} a_x \cdot |x| \leq c$,
where $\{a_x\}_{x \in \mathcal V}, c$ are integer constants
and $|x|$ denotes the length $|\mu(x)|$, with an obvious semantics.
It is an open problem, whether word equations with length constraints are 
decidable, see \cite{ganesh-word}.

\subsubsection*{Reduction to word equations}
We assume Lemma~\ref{lm:reduce},
i.e.\ that the parameters used for different automata-based constraints are pairwise different.
In particular, when looking for a satisfying assignment $\mu$
we can first fix assignment for $\params$
and then try to extend it to $\mathcal V$.
To avoid confusion, we call this partial assignment $\pi : \params \to D$.

Consider a set $\Phi$ of all atoms in all guards in the regular constraints
together with the set of formulas $\{x = c\}$ over all constants  
$c \in D$ that appear in all equational constraints and the negations of both types of formulas.
Fix an assignment $\pi : \params \to D$.
The type $\type_\pi(a)$ of $a$ (under assignment $\pi$)
is the set of formulas in $\Phi$ satisfied by $a$,
i.e.\ $\{ \phi \in \Phi \: : \:  \phi(\pi(\params), a) \text{ holds}\}$.
Clearly there are at most exponentially many different types
(for a fixed $\pi$).
A type $t$ is realizable (for $\pi$) when $t = \type_\pi(a)$
and it is realized by $a$.

If the constraints are satisfiable (for some parameters assignment $\pi$)
then they are satisfiable over a subset $D_\pi \subseteqf D$,
in the sense that we assign uniterpreted constants elements from $D_\pi$ and
$T$-sequence variables elements of $D_\pi^*$,
where $D_\pi$ is created by taking (arbitrarily) one element of a realizable type.
Note that for each constant $c$ in the equational constraints there is a formula ``$x = c$'' in $\Phi$,
in particular $\type_\pi(c)$ is realizable (only by $c$) and so $c \in D_\pi$.

\begin{lemma}
\label{lem:finite_domain_subset_satisfiability}
 Given a system of constraints and a parameter assignment $\pi$
 let $D_\pi \subseteq D$ be obtained by choosing (arbitrarily)
 for each realizable type a single element of this type.
 Then the set of constraints is satisfiable (for $\pi$) over $D$
 if and only if they are satisfiable (for $\pi$) over $D_\pi$.
 To be more precise, there is a letter-to-letter homomorphism
 $\psi : D^* \to D_\pi^*$ such that if $\mu$ is a solution of a system of constraints
 then $\psi \circ \mu$ is also a solution.
%
%
\end{lemma}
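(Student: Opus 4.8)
The plan is to construct the letter-to-letter homomorphism $\psi : D^* \to D_\pi^*$ explicitly, verify it is type-preserving, and then observe that every atomic constraint in the language (equations, element constraints, regular constraints, and — if needed — transducer constraints) only "sees" an element through its type, so applying $\psi$ coordinatewise to any solution yields another solution.

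\textbf{Construction of $\psi$.} For each element $d \in D$, its type $\type_\pi(d)$ is by definition realizable, so the construction of $D_\pi$ has placed exactly one representative of that type into $D_\pi$; call it $\rho(\type_\pi(d)) \in D_\pi$. Define $\psi$ on single letters by $\psi(d) := \rho(\type_\pi(d))$, and extend to $D^*$ by the free-monoid (letter-to-letter) homomorphism. Two immediate properties: (i) $\psi$ restricted to $D_\pi$ is the identity, since each $d \in D_\pi$ is the chosen representative of its own type; and (ii) $\type_\pi(\psi(d)) = \type_\pi(d)$ for every $d$, because $\psi(d)$ was selected from the same type. Property (ii) is the crux: it says $\psi(d)$ satisfies exactly the same formulas of $\Phi$ (relative to $\pi$) as $d$ does. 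In particular, for every constant $c$ occurring in the equational constraints, the formula "$x = c$" lies in $\Phi$, so $\type_\pi(c)$ is realized only by $c$, whence $\psi(c) = c$: $\psi$ fixes all such constants.

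\textbf{Why $\psi \circ \mu$ is a solution.} Let $\mu$ be a solution for $\pi$. We need $\psi \circ \mu$ to satisfy every atom. For an \emph{equational constraint} $L = R$: since $\psi$ is a monoid homomorphism and fixes every concrete letter appearing in $L$ and $R$, we have $(\psi\circ\mu)(L) = \psi(\mu(L)) = \psi(\mu(R)) = (\psi\circ\mu)(R)$. For an \emph{element constraint} $P$ given by a $T$-sentence $\varphi$ over constants $\Constants$: by Lemma~\ref{lm:reduce} we may assume element constraints have been eliminated, but even directly, the relevant atoms "$x = c$" and theory atoms over the constants are all in $\Phi$, so type-preservation gives $T \models_{\psi\circ\mu} \varphi \iff T\models_\mu \varphi$. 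For a \emph{regular constraint} $\Aut(\Xseq)$: take an accepting run of $\Aut$ on $\mu(\Xseq) = (d_1,\dots,d_n)$ under parameter assignment $\pi$, using guards $\varphi_1,\dots,\varphi_n$; since each atom of each $\varphi_i$ lies in $\Phi$ and $\type_\pi(\psi(d_i)) = \type_\pi(d_i)$, the Boolean combination $\varphi_i$ has the same truth value at $\psi(d_i)$ as at $d_i$, so $T\models\varphi_i(\psi(d_i),\pi(\params))$ and the very same run is accepting on $\psi(\mu(\Xseq)) = ((\psi\circ\mu)(\Xseq))$. This proves the "only if" direction of the stated equivalence; the "if" direction is immediate since $D_\pi \subseteq D$, so a solution over $D_\pi$ is already a solution over $D$.

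\textbf{Main obstacle.} The only real subtlety is \emph{making sure $\Phi$ contains all atoms that any constraint can distinguish}. Guards in PAs are arbitrary Boolean combinations of $T$-atoms over $\curr$ and $\params$, so one must argue that, with $\pi$ fixed, each such atom becomes a unary predicate on the "current" element and is therefore (up to the already-fixed parameter values) one of the finitely many formulas collected in $\Phi$; similarly the equational constraints only compare the current letter against the finitely many concrete constants $c$, and those "$x=c$" atoms are explicitly thrown into $\Phi$. Once this bookkeeping is in place, type-preservation does all the work and the rest is the routine homomorphism argument sketched above. (If transducer constraints $\Yseq = \Tra(\Xseq)$ are also in scope, the same reasoning applies to the output terms in $\Wseq$: one additionally needs the homomorphism $\psi$ to commute with the term maps $\mu_i$, which again follows from type-preservation once the atoms governing those terms are included in $\Phi$ — but for this lemma, stated for equation/element/regular constraints, that extension is not needed.)
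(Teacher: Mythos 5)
Your proposal is correct and follows essentially the same route as the paper's proof: map each letter to the chosen representative of its $\pi$-type, observe that this letter-to-letter map preserves types (hence guard satisfaction in the parametric automata) and fixes every constant occurring in the equations because ``$x=c$'' is an atom of $\Phi$ that only $c$ can realize, and conclude that composing a solution with this homomorphism yields a solution over $D_\pi$, the converse direction being trivial by $D_\pi\subseteq D$. Your algebraic phrasing of the equational case via $\psi(\mu(L))=(\psi\circ\mu)(L)$ is a slightly slicker packaging of the paper's position-by-position case analysis, but the content is identical.
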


The proof can be found in the full version, its intuition is clear:
we map each letter $a \in D$ to the unique letter in $D_\pi$ of the same type.


Once the assignment is fixed (to $\pi$) and domain restricted to a finite set ($D_\pi$),
the equational and regular constraints reduce to word equations with regular constraints:
treat $D_\pi$ as a finite alphabet, for a parametric automaton
$\Aut = (\params,\controls,\transrel,q_0,\finals)$
create an NFA $\Aut' = (D_\pi,\controls,\transrel',q_0,\finals)$,
i.e.\ over the alphabet $D_\pi$, with the same set of states $Q$,
same starting state $q_0$ and accepting states $F$
and the relation defined as $(q,a,q') \in \transrel'$
if and only if there is $(q,\phi(\curr,\params),q') \in \transrel$
such that $\phi(a,\pi(\params))$ holds,
i.e.\ we can move from $q$ to $q'$ by $a$ in $\Aut'$ if and only if
we can make this move in $\Aut$ under assignment $\pi$.
Clearly, from the construction 

\begin{lemma}
\label{lem:parametric_to_regular}
	Given an assignment of parameters $\pi$ let $D_\mu$ be a set from Lemma~\ref{lem:finite_domain_subset_satisfiability},
	$\Aut$ be a parametric automaton and $\Aut'$ the automaton as constructed above.
	Then 
	\[
	L_\pi(\Aut) \cap D_\pi^* = L(\Aut') \enspace .
	\]
\end{lemma}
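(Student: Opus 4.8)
The statement $L_\pi(\Aut) \cap D_\pi^* = L(\Aut')$ is really a direct bookkeeping consequence of the construction of $\Aut'$, so the plan is to prove the two inclusions by matching up accepting runs position-by-position. The only subtlety to keep track of is that $L_\pi(\Aut)$ is a language over the infinite domain $D$, while $L(\Aut')$ lives over the finite alphabet $D_\pi \subseteq D$; intersecting with $D_\pi^*$ on the left is exactly what makes the two sides comparable.

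\textbf{First inclusion ($\supseteq$).} Take $w = (a_1,\dots,a_n) \in L(\Aut')$ with each $a_i \in D_\pi$. By definition of acceptance in the NFA $\Aut'$ there is a run $q_0 \tran{a_1} q_1 \tran{a_2} \cdots \tran{a_n} q_n$ with $q_n \in \finals$ and $(q_{i-1},a_i,q_i) \in \transrel'$ for each $i$. By the very definition of $\transrel'$, for each $i$ there is a transition $(q_{i-1},\phi_i(\curr,\params),q_i) \in \transrel$ of the original PA such that $\phi_i(a_i,\pi(\params))$ holds, i.e.\ $T \models \phi_i(a_i,\pi(\params))$. These transitions, read in order, form precisely the witness required by the semantics of $L_\pi(\Aut)$, so $w \in L_\pi(\Aut)$; and $w \in D_\pi^*$ by assumption, giving $w \in L_\pi(\Aut) \cap D_\pi^*$.

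\textbf{Second inclusion ($\subseteq$).} Conversely, let $w = (a_1,\dots,a_n) \in L_\pi(\Aut) \cap D_\pi^*$. Unfolding the semantics of parametric automata, there is a sequence of transitions $(q_{i-1},\phi_i(\curr,\params),q_i) \in \transrel$ with $q_n \in \finals$ and $T \models \phi_i(a_i,\pi(\params))$ for each $i$. Since $a_i \in D_\pi$, the pair $(q_{i-1},a_i,q_i)$ satisfies the defining condition of $\transrel'$ (the witnessing PA-transition being exactly $(q_{i-1},\phi_i,q_i)$), hence $(q_{i-1},a_i,q_i) \in \transrel'$. Chaining these yields an accepting run of $\Aut'$ on $w$, so $w \in L(\Aut')$. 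Note this direction crucially uses $a_i \in D_\pi$; without intersecting with $D_\pi^*$ the inclusion would fail, since $\Aut'$ simply has no letters outside $D_\pi$ on its transitions.

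\textbf{Main obstacle.} There is essentially no hard step here — the argument is a routine unfolding of the two semantics and the definition of $\transrel'$. If anything, the one point that deserves a word of care is making sure the \emph{guards} are evaluated consistently: the PA semantics requires $T \models \phi_i(d_i,\mu(\params))$ with $\mu$ ranging over all constants, whereas in $\transrel'$ we only substitute the \emph{parameter} assignment $\pi$. This matches because Lemma~\ref{lm:reduce} lets us assume the parameters of distinct automata are disjoint, so fixing $\pi$ on $\params$ once and for all is sound, and the guards $\phi_i$ mention no constants beyond $\params$ and $\curr$. With that observation in place both inclusions go through immediately, and the equality follows.
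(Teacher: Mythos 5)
Your proof is correct and matches the paper's intent: the paper simply asserts the lemma "clearly, from the construction," and your two inclusions are the routine unfolding of the PA semantics and the definition of $\transrel'$ that this assertion implicitly relies on. The side remark about guards is also sound, since guards of a PA lie in $T(\curr,\params)$ by definition, so $L_\pi(\Aut)$ depends only on $\pi$ restricted to $\params$.
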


We can rewrite the parametric automata-constraints with regular constraints
and treat equational constraints as word equations (over the finite alphabet $D_\pi$).
From Lemma~\ref{lem:finite_domain_subset_satisfiability}
and Lemma~\ref{lem:parametric_to_regular} it follows that the original constraints
have a solution for assignment $\pi$ if and only if the constructed system of constraints has a solution.
Therefore once the appropriate assignment $\pi$ is fixed,
the validity of constraints can be verified~\cite{wordeqgroups}.
It turns out that we do not need the actual $\pi$,
it is enough to know which types are realisable for it,
which translates to an exponential-size formula.
We will use letter $\tau$ to denote subset of $\Phi$;
the idea is that $\tau = \{\type_\pi(a) \: : \: a \in D\} \subseteq 2^\Phi$
and if different $\pi, \pi'$ give the same sets of realizable types,
then they both yield a satisfying assignment or both not.
Hence it is enough to focus on $\tau$ and not on actual $\pi$.

\begin{lemma}
\label{lem:parameters_type_verification}
Given a system of equational and regular constraints
we can non-deterministically reduce them to a formula of a form
\begin{equation}
\label{eq:parameters_type_verification}
\exists_{t \in \tau} a_t \in D .\, \exists \params \in D^+ .\, \bigwedge_{t \in \tau} \bigwedge_{\phi \in t}
\phi(\params, a_t) \enspace ,
\end{equation}
where $\tau \subseteq 2^\Phi$ is of at most exponential size,
and a system of word equations with regular constraints of linear size
and over an $|\tau|$-size alphabet, using auxiliary $\mathcal O(n |\tau|)$ space.
The solution of the latter word equations (for which also \eqref{eq:parameters_type_verification} holds)
are solutions of the original system, by appropriate identifications of symbols.
\end{lemma}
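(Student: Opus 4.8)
The plan is to combine Lemmas~\ref{lem:finite_domain_subset_satisfiability} and~\ref{lem:parametric_to_regular} with the observation that, once the parameter assignment $\pi$ is fixed, everything the string solver sees depends on $\pi$ only through the finite set of realizable types $\mathcal R(\pi) = \{\type_\pi(a) : a \in D\}$. First I would make this precise. Every guard $\phi(\curr,\params)$ in a regular constraint is a Boolean combination of atoms from $\Phi$, so whether $\phi(a,\pi(\params))$ holds is completely determined by $\type_\pi(a)$; similarly, whether a constant $c$ occurring in an equational constraint equals a letter $a$ is recorded by the atom ``$x=c$''$\in\Phi$, hence again by $\type_\pi(a)$. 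Consequently both the alphabet $D_\pi$ (up to renaming each letter by its type) and the automaton $\Aut'$ of Lemma~\ref{lem:parametric_to_regular} are determined by $\mathcal R(\pi)$ alone.

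The reduction then guesses this set. Non-deterministically pick $\tau\subseteq 2^\Phi$ subject to the syntactic requirements that every $t\in\tau$ contains, for each atom $\psi$, exactly one of $\psi$ and $\neg\psi$ (so every guard has a well-defined truth value ``under $t$''), and that for every constant $c$ in the equational constraints $\tau$ contains a type $t_c$ with ``$x=c$''$\in t_c$. Since $|\Phi| = \mathcal O(n)$, the guessed $\tau$ has at most exponential size. For it we emit two things. (i)~The formula~\eqref{eq:parameters_type_verification}: it is satisfiable exactly when there is a $\pi$ making every $t\in\tau$ realizable, because the conjunct $\bigwedge_{\phi\in t}\phi(\params,a_t)$ forces $\type_\pi(a_t)=t$ (as $t$ is complete), and in particular $a_{t_c}=c$. (ii)~A system of word equations with regular constraints over the finite alphabet $\tau$: read each equational constraint with every constant $c$ replaced by the letter $t_c$, and replace each regular constraint $\Aut(\Xseq)$ by $\Aut'(\Xseq)$, where $\Aut'$ has the control structure of $\Aut$ and a transition $(q,t,q')$ whenever $\Aut$ has a transition $(q,\phi(\curr,\params),q')$ with $\phi$ true under $t$. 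Both are produced directly; the equations have linear size and the $\Aut'$ fit in $\mathcal O(n|\tau|)$ auxiliary space.

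Correctness splits into two directions. \emph{Completeness}: from a solution $\mu$ of the original system with parameter part $\pi$, take $\tau=\mathcal R(\pi)$; then~\eqref{eq:parameters_type_verification} holds (witnessed by $\pi$ with representatives $a_t$), and by Lemma~\ref{lem:finite_domain_subset_satisfiability} the homomorphic image $\psi\circ\mu$ solves the constraints over $D_\pi$, which under the bijection $D_\pi\leftrightarrow\tau$ and Lemma~\ref{lem:parametric_to_regular} yields a solution of the word-equation system over $\tau$. \emph{Soundness}: if~\eqref{eq:parameters_type_verification} holds --- fix a witnessing $\pi$ and elements $a_t$ --- and the word-equation system over $\tau$ has a solution $\mu'$, map every letter $t$ to $a_t$ (so $t_c\mapsto c$). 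Since $t\mapsto a_t$ is injective (distinct complete types are realized by distinct elements), this preserves the equational constraints; and since each transition $(q,t,q')$ of $\Aut'$ witnesses a genuine $\Aut$-transition enabled under $\pi$ on letter $a_t$, each $\mu'(\Xseq)\in L(\Aut')$ maps into $L_\pi(\Aut)$. Together with $\pi$ this gives a solution of the original system.

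The main obstacle is the soundness bookkeeping, due to the fact that the guessed $\tau$ is only required to be a \emph{subset} of $\mathcal R(\pi)$ for the witnessing $\pi$ (formula~\eqref{eq:parameters_type_verification} neither asserts nor needs $\tau=\mathcal R(\pi)$): one must check that the possibly smaller automaton $\Aut'$ over $\tau$ is still sound relative to $\Aut$ under $\pi$, and that restricting the equations to the alphabet $\tau$ introduces no spurious solutions --- both follow from injectivity of $t\mapsto a_t$ and from $\phi\in t\Rightarrow\phi(a_t,\pi(\params))$, but must be spelled out carefully. A minor point is to verify the stated sizes: $|\Phi|$ and the word equations are linear in $n$, $|\tau|\le 2^{|\Phi|}$, and writing down $\tau$, the formula~\eqref{eq:parameters_type_verification}, and the automata $\Aut'$ uses $\mathcal O(n|\tau|)$ space.
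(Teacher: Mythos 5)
Your proposal is correct and follows essentially the same route as the paper's proof: guess the set $\tau$ of realizable types, use formula~\eqref{eq:parameters_type_verification} to certify that some $\pi$ realizes every $t\in\tau$, build the word equations and NFAs over the type alphabet, and argue both directions via Lemmas~\ref{lem:finite_domain_subset_satisfiability} and~\ref{lem:parametric_to_regular}. The soundness subtlety you flag (that $\tau$ need only be a subset of the types realizable under the witnessing $\pi$) is the same point the paper addresses by noting that enlarging the alphabet cannot invalidate a solution.
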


	\begin{proof}
		We guess the set $\tau$ of types of the assignment of parameters $\pi$,
		i.e.\ $\tau = \{\type_\pi(a) \: : \: a \in D\}$
		such that there is an assignment $\mu$ extending $\pi$;
		note that as $\Phi$ has linearly many atoms and $\tau \subseteq 2^\Phi$,
		then $|\tau|$ may be of exponential size, in general.
		The~\eqref{eq:parameters_type_verification} verifies the guess:
		we validate whether there are values of $\params$ such that
		for each type $t \in \tau$ there is a value $a$ such that $\type_\pi(a) = t$.

		Let $D_\pi$ be a set having one symbol per every type in $\tau$,
		as in Lemma~\ref{lem:finite_domain_subset_satisfiability};
		note that this includes all constants in the equational constraints.
		The algorithm will not have access to particular values,
		instead we store each $t \in \tau$, say as a bitvector describing which atoms in $\Phi$ this letter satisfies.
		In particular, $|D_\pi| = |\tau|$ and it is at most exponential. 
		In the following we will consider only solutions over $D_\pi$.
		
		For each $a \in D_\pi$ we can validate,
		which transitions in $\Aut$ it can take:
		the transition is labelled by a guard which is a conjunction of atoms from $\Phi$ and
		either each such atom is in $\type_\pi(a)$ or not.
		Hence we can treat $\Aut$ as an NFA for $D_\pi$.
		We do not need to construct nor store it, we can use $\Aut$:
		when we want to make a transition by $\phi(\params,a)$ we look up, whether each atom
		of $\phi$ is in $\type_\pi(a)$ or not.
		Similarly, the constraint $\Aut(\Xseq)$ is restricted to
		$\Xseq \in L_\pi(\Aut)$ and for $\Xseq \in D_\pi^*$ this is a usual regular constraint.
		
		We treat equational constraints as word equations over alphabet $D_\pi$.


		Concerning the correctness of the reduction:
		if the system of word equations (with regular constraints) is satisfiable
		and the formula~\eqref{eq:parameters_type_verification} is also satisfiable,
		then there is a satisfying assignment $\mu$ over $D_\pi$ and $D_\pi^*$
		in particular, there is an assignment of parameters for which there are letters
		of the given types (note that in principle it could be that $\mu$ induces more types,
		i.e.\ there is a value $a$ such that $\type_\mu(a) \notin \tau$ and so it is not represented in $D_\pi$,
		but this is fine: enlarging the alphabet cannot invalidate a solution),
		i.e.\ the transitions for $a_t$ in the automata after the reduction are the same
		as in the corresponding parametric automata for the assignment $\pi$,
		this is guaranteed by the satisfiability of~\eqref{eq:parameters_type_verification}
		and the way we construct the instance, see Lemma~\ref{lem:parametric_to_regular}.

		On the other hand, when there is a solution of the input constraints,
		there is one for some assignment of parameters $\pi$.
		Hence, by Lemma~\ref{lem:finite_domain_subset_satisfiability},
		there is a solution over $D_\pi$. 
		The algorithm guesses $\tau = \{\type_\pi(a) \: : \: a\in D\}$ and
		\eqref{eq:parameters_type_verification} is true for it.
		Then by Lemma~\ref{lem:finite_domain_subset_satisfiability}
		there is a solution over $D_\pi$ as constructed in the reduction
		and by Lemma~\ref{lem:parametric_to_regular} the regular constraints define
		the same subsets of $D_\pi^*$ both when interpreted as parametric automata
		and NFAs.
		\qed
	\end{proof}

\begin{theorem}
\label{thm:sequence_constraints_expspace}
If theory $T$ is in \pspace then sequence constraints are in \expspace.

If $\tau$ is polynomial size and the formula~\eqref{eq:parameters_type_verification} can be verified in \pspace,
then sequence constraints can be verified in \pspace.
\end{theorem}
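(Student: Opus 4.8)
The plan is to run the nondeterministic reduction of Lemma~\ref{lem:parameters_type_verification} and then feed the two resulting obligations to the word-equation machinery and a $T$-solver, tracking the resource usage carefully. First I would recall that by Lemma~\ref{lem:parameters_type_verification}, from the input sequence constraints we nondeterministically produce (a)~the formula~\eqref{eq:parameters_type_verification}, an $\exists$-$T$-sentence whose size is dominated by $|\tau|$ (at most exponential) times the size of $\Phi$ (linear), so overall at most exponential; and (b)~a system of word equations with regular constraints of linear size over an alphabet of size $|\tau|$, using auxiliary $\mathcal O(n|\tau|)$ workspace. The correctness part of that lemma tells us the original instance is satisfiable iff there is a guess of $\tau$ making both (a) and (b) satisfiable, so the whole procedure is: guess $\tau$ (exponentially many bits, hence writable in exponential space), check (a) with the $T$-solver, and check (b) with the word-equation solver; accept iff both succeed.

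Next I would bound the two checks. For (a): $T$ is in \pspace, and the sentence~\eqref{eq:parameters_type_verification} is of exponential size, so deciding it takes space polynomial in an exponential, i.e.\ exponential space — comfortably inside \expspace. (Note the quantifier prefix $\exists_{t\in\tau}a_t\,\exists\bar p$ has at most exponentially many variables, which is fine once we have already spent exponential space.) For (b): the word-equation-with-regular-constraints problem is in \pspace in the size of its input \cite{diekertfreegroups,wordeqgroups}; here the input has linear-size equations but an exponential-size alphabet $D_\pi$ encoded implicitly via $\tau$. The key subtlety — and the one place I would be most careful — is that we must not materialize $D_\pi$ or the NFAs $\Aut'$ explicitly. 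Following the discussion in the Prelude and \cite[Sect.~5.2]{wordeqgroups}, the word-equation algorithm only ever needs, given a set of NFA transitions, to know whether some letter labels exactly that set; here a "letter" is a type $t\in\tau$, and whether $t$ enables a given guard is decided by checking, atom by atom, membership of $\Phi$-atoms in the bitvector $t$ — an $\mathcal O(|\Phi|)=\mathcal O(n)$ computation given $t$, and an oracle query "is there such a $t$ in $\tau$?" that is answered during the guess of $\tau$ (or re-verified via~\eqref{eq:parameters_type_verification}). Thus the word-equation solver runs in space polynomial in $n$ plus the space to store a few type-bitvectors, i.e.\ $\mathcal O(\mathrm{poly}(n\cdot|\tau|))$, which for exponential $|\tau|$ is again exponential space. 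Combining, the nondeterministic exponential-space procedure can be determinized (Savitch) at no asymptotic cost, giving \expspace.

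For the refined claim, suppose $|\tau|$ is of polynomial size and~\eqref{eq:parameters_type_verification} can be checked in \pspace. Then the guess of $\tau$ is only polynomially many bits; obligation (b) is now a genuine word-equation-with-regular-constraints instance of polynomial size (linear equations, polynomial-size alphabet $D_\pi$, explicitly constructible), decidable in \pspace by \cite{diekertfreegroups,wordeqgroups}; and obligation (a) is in \pspace by hypothesis. So the whole thing is a nondeterministic polynomial-space computation, and by Savitch \pspace\ $=$ \npspace, so sequence constraints are decidable in \pspace.

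The main obstacle is precisely the implicit-alphabet bookkeeping in the exponential case: one must argue that the \pspace\ word-equation algorithm of \cite{wordeqgroups} can be run relative to the compact description $\tau$ of the exponential alphabet without ever writing down $D_\pi$ or the derived NFAs, so that its polynomial space usage is measured against $n$ and $\log|D_\pi^{\mathrm{repr}}|$ rather than $|D_\pi|$. Everything else — the reduction, the size bounds, and the final Savitch determinization — is routine once that point is made.
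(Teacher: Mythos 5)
Your proposal is correct and follows essentially the same route as the paper: apply Lemma~\ref{lem:parameters_type_verification}, check the (at most exponential-size) formula~\eqref{eq:parameters_type_verification} with the \pspace{} $T$-solver, and run the \pspace{} word-equation algorithm of \cite{wordeqgroups} with only oracle access to the type-encoded alphabet $D_\pi$, which yields \expspace{} in general and \pspace{} under the stated refinement. The ``implicit-alphabet'' subtlety you flag is exactly the point the paper's proof also rests on.
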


One of the difficulties in deciding sequence constraints using the word equations approach
is the size of set of realizable types $\tau$,
which could be exponential.
For some concrete theories it is known to be smaller and thus a lower upper bound on complexity follows.
For instance, it is easy to show that for 
LRA there are linearly many realizable types,
which implies a \pspace upper bound.

\begin{corollary}\label{cor:LRA}
Sequence constraints for Linear Real Arithmetic are in \pspace.
\end{corollary}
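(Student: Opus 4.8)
The plan is to derive the corollary directly from Theorem~\ref{thm:sequence_constraints_expspace}: it suffices to show that, when $T$ is LRA, the set $\tau$ of realizable types is of polynomial size and the formula~\eqref{eq:parameters_type_verification} can be verified in \pspace. Both facts will follow from a single observation: once a parameter assignment $\pi$ is fixed, there are only linearly many realizable types.

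To establish this, recall that $\Phi$ consists of the atoms occurring in the guards of the parametric automata, the equalities ``$\curr = c$'' for the finitely many constants $c$ appearing in the equational constraints, and the negations of all of these; thus $|\Phi|$ is linear in the input size. In LRA every such atom has the shape $\beta\cdot\curr + \sum_{p\in\params}\alpha_p\cdot p \bowtie \gamma$ with $\bowtie\,\in\{<,\le,=,\ne,\ge,>\}$. After substituting the concrete values $\pi(\params)$ for the parameters, each atom becomes either trivially true or trivially false (when $\beta=0$), or a comparison of $\curr$ against a single rational threshold $\theta$ (when $\beta\neq 0$, with the direction of $\bowtie$ possibly flipped). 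Let $\theta_1<\dots<\theta_k$ be the distinct thresholds so obtained, so $k\le|\Phi|$. These partition $D=\mathbb{R}$ into the $k$ singletons $\{\theta_j\}$ and the $k+1$ open intervals between and around them, i.e.\ into at most $2k+1$ regions; any two elements lying in the same region satisfy exactly the same atoms of $\Phi$ and hence have the same type $\type_\pi(\cdot)$. Consequently there are at most $2k+1=O(n)$ realizable types for $\pi$, so in Lemma~\ref{lem:parameters_type_verification} we have $|\tau|=O(n)$, which is polynomial; note the thresholds need not be computed explicitly, this argument is used only to bound $|\tau|$.

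It remains to check that~\eqref{eq:parameters_type_verification} is verifiable in \pspace. With $|\tau|=O(n)$ it has $O(n)$ existentially quantified real variables (one $a_t$ per type together with the parameters $\params$) and is a conjunction of $\sum_{t\in\tau}|t|=O(n^2)$ LRA atoms; since $\Phi$ is closed under negation, this conjunction pins down each $\type_\pi(a_t)$ exactly. This is a polynomial-size formula in quantifier-free LRA, whose satisfiability is decidable in \np, hence in \pspace. Plugging $|\tau|$ polynomial and this \pspace check into Theorem~\ref{thm:sequence_constraints_expspace} gives that sequence constraints over LRA are in \pspace. I expect the only genuine content to be the type-counting step above; the rest is bookkeeping plus the appeal to Theorem~\ref{thm:sequence_constraints_expspace}, with the one subtlety being that the correct $\tau$ is obtained by a non-deterministic guess of polynomial size (legitimate since $\pspace=\npspace$) which is then validated via~\eqref{eq:parameters_type_verification} and the word-equation solver exactly as in Lemma~\ref{lem:parameters_type_verification}.
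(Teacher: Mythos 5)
Your proposal is correct and follows essentially the same route as the paper: fix the parameter assignment $\pi$, observe that each LRA atom then reduces to a comparison of $\curr$ against one of linearly many thresholds, conclude that there are at most $2k+1$ realizable types, and invoke the second part of Theorem~\ref{thm:sequence_constraints_expspace}. Your additional bookkeeping verifying that formula~\eqref{eq:parameters_type_verification} is polynomial-size and hence checkable in \pspace is a detail the paper leaves implicit, but it is the same argument.
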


In general, the \expspace upper bound from Theorem~\ref{thm:sequence_constraints_expspace}
cannot be improved, as even non-emptiness of intersection of parametric automata
is \expspace-complete for some theories decidable in \pspace.
This is in contrast to the case of symbolic automata,
for which the non-emptiness of intersection (for a theory $\theory$ decidable in \pspace)
is in \pspace.
This shows the importance of parameters in our lower bound proof.

\begin{theorem}\label{thm:we-expspace}
There are theories with existential fragment decidable in \pspace and whose non-emptiness of intersection of parametric automata
is \expspace-complete.
\end{theorem}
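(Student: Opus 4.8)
The statement splits into an \expspace\ upper bound, which is a consequence of results already established, and a matching lower bound for a carefully chosen theory, which is the substantial direction.

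For the upper bound, observe that nonemptiness of the intersection of parametric automata $\Aut_1,\dots,\Aut_m$ over $T$ is exactly satisfiability of the sequence constraint $\bigwedge_{i=1}^{m}\Aut_i(\Xseq)$: first apply Lemma~\ref{lm:reduce} so that the parameter sets are pairwise disjoint, after which a single assignment $\mu$ witnesses all the $\Aut_i$ at once. Theorem~\ref{thm:sequence_constraints_expspace} then gives membership in \expspace\ whenever $T$ is decidable in \pspace. It is instructive to compare this with a \emph{single} parametric automaton, whose nonemptiness is in \pspace: there it is enough to guess a \emph{simple} path, with polynomially many transitions, and test its realizability by one $T$-query; in an intersection the shortest common word may have to traverse the exponential-size product automaton, so it can be exponentially long, and all of its positions are coupled through the shared parameters. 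This coupling is the source of the extra power of parameters and the reason the complexity can rise.

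For the lower bound I would fix a structure $\struct$ whose quantifier-free theory is decidable in \pspace\ (indeed in \np) for elementary reasons — for instance $\struct=(\mathbb{N};0,\mathrm{succ},<,\mathrm{Bit})$, with $\mathrm{Bit}(x,i)$ testing the $i$-th bit of $x$ — and reduce to nonemptiness of intersection of parametric automata over $\struct$ from an \expspace-complete problem, namely acceptance of an input $x$ of length $n$ by a fixed nondeterministic Turing machine running in space $2^n$ (equivalently, the exponential-width corridor tiling problem). A candidate accepting run is encoded as a single sequence $w=C_0\,\#\,C_1\,\#\cdots\#\,C_\ell$, where each configuration $C_i$ is a block of $2^n$ cells and every cell stores its tape/head content (from a finite set) together with its address $a\in\{0,\dots,2^n{-}1\}$ inside the block. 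Polynomially many parametric automata then enforce: correct block/address formatting; that $C_0$ is the initial and $C_\ell$ an accepting configuration; and the local transition rule between consecutive configurations. The transition check is where the parameters are essential: the content of cell $(i{+}1,a)$ is determined by cells $(i,a{-}1),(i,a),(i,a{+}1)$, which sit $\Theta(2^n)$ positions earlier in $w$, so an automaton takes an address $a$ as its parameter, scans $w$ keeping the three relevant cells of the current configuration in its finite control, and verifies the rule whenever it reaches address $a$ of the next configuration.

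The main obstacle — and the real content of the proof — is that one such automaton, with its parameter chosen \emph{existentially}, certifies only a single address-slice, whereas a correct run requires all $2^n$ slices to be certified, and parametric automata admit no universal quantification over parameters (as noted in the paper, complementation only complements at a fixed parameter valuation). Equivalently: every linearization of the two-dimensional computation grid contains dependency edges of length $\Theta(2^n)$, each such edge must be bridged by some parameter, and a single parameter value covers only one class of edges. The crux is therefore to organise $w$, and to distribute the polynomially many automata and their parameters, so that all of these exponentially many obligations are discharged together — e.g.\ by re-presenting the run as an address-indexed stream of local windows $\big((i,a{-}1),(i,a),(i,a{+}1),(i{+}1,a)\big)$ on which a parameter-free symbolic automaton checks the rule locally, while a small number of parametric automata force the window stream to be faithful to, and exhaustive of, the run. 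Getting this watertight, while keeping every automaton polynomial in $n$ and the theory of $\struct$ in \pspace, is the part I expect to require the most care; once it is done, nonemptiness of intersection over $\struct$ is \expspace-hard, and together with the upper bound it is \expspace-complete.
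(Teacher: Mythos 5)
Your upper bound is fine, and you have correctly isolated the central difficulty of the lower bound: an existentially chosen parameter certifies only one address slice, while a correct computation needs all $2^n$ slices certified, and there is no universal quantification over parameters. But you do not resolve this difficulty, and the resolution you gesture at does not work. Your ``window stream'' reformulation merely relocates the problem: to force the stream of local windows to be \emph{faithful to} the configuration encoding, you must compare each window against cells that sit $\Theta(2^n)$ positions away in the combined encoding, which is exactly the same exponential-distance dependency you started with. Over your chosen structure $(\mathbb{N};0,\mathrm{succ},<,\mathrm{Bit})$ a parameter is a single natural number carrying only polynomially many bits of usable information per guard, so polynomially many parameters cannot bridge exponentially many independent long-range obligations this way. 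You explicitly defer ``getting this watertight,'' but that deferral is the entire theorem.

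The paper's proof closes the gap by a different choice of theory, and this is the key idea you are missing: it takes $D=\Sigma^*$ with automatic (in particular binary, length-preserving-style) relations as the element theory, which is still decidable in \pspace. Now a \emph{single} parameter $p$ is itself a string and can encode the entire $2^n\times m$ tiling (all rows concatenated, each cell tagged with its $n$-bit column address). The sequence $\Xseq$ is forced, by $n$ parameter-free parametric automata in the style of the standard ``count to $2^n$ with $n$ small automata'' trick, to enumerate all column indices $0,\dots,2^n-1$. The vertical-tile check is then a single guard containing a $2$-ary automatic relation between $\curr$ (the current column index, suitably repeated) and $p$: one guard evaluation scans all of $p$ and verifies one entire column. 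Universality over the $2^n$ slices is thus obtained not by quantifying over parameters but by making the sequence visit every slice, while the one existential parameter already holds all the information needed to check each of them. Without an element theory whose single elements can carry exponential information and whose guards can perform such a comparison, the construction you outline does not go through.
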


When no regular constraints are allowed,
we can solve the equational and element constraints in \pspace (note that we do not use Lemma~\ref{lm:reduce}).

\begin{theorem}
\label{thm:no_reg_constraints}
For a theory $\theory$ decidable in \pspace,
the element and equational constraints (so no regular constraints) can be decided in \pspace.
\end{theorem}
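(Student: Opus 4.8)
The plan is to reuse the reduction to word equations underlying Lemma~\ref{lem:parameters_type_verification}, but to exploit that dropping regular constraints removes the only source of exponential blow-up — the set $\Phi$ of guard atoms, and hence the exponentially many letter types — leaving a genuinely polynomial instance. First I would normalise the input: eliminate negated equations as in Section~\ref{sec:model} (the resulting disjunctions are absorbed by a nondeterministic choice, harmless since $\pspace = \textup{NPSPACE}$), so that what remains is a conjunction of positive $T$-sequence equations plus element constraints; I would merge the latter into a single $T$-sentence $\varphi_{\mathrm{el}}$ over the finite set $A \subseteq \Constants$ of uninterpreted constants occurring in the input, and let $C \subseteq D$ be the finite set of concrete $D$-elements occurring in the equations. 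Crucially, element constraints must be handled \emph{monolithically}: if one routed them through Lemma~\ref{lm:reduce} they would reappear as guard atoms and reintroduce exponentially many types, which is exactly what we need to avoid.

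Next I would guess, nondeterministically and in polynomial space, an equivalence relation $\approx$ on $A \cup C$ that keeps distinct elements of $C$ apart, and verify it in two steps. Step one: test $T$-satisfiability of the polynomial-size formula conjoining $\varphi_{\mathrm{el}}$ with the (in)equalities prescribed by $\approx$ and with $c \neq c'$ for distinct $c,c' \in C$; this is a \pspace test because $T$ is in \pspace (here I rely, as the paper does throughout, on ``$x = c$'' being a $T$-formula for each concrete $c$). Step two: form the finite alphabet $\Gamma' := (A \cup C)/{\approx}$, of size at most $|A|+|C|$, replace each constant in the equations by its class, and decide the resulting system of word equations over $\Gamma'$ — with no regular constraints — which is in \pspace by the classical results cited in the ``case of strings'' prelude~\cite{wordeqgroups,buchi}. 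I would accept iff some guessed $\approx$ passes both tests; the degenerate case $A \cup C = \emptyset$ is handled separately (the equations then contain no letters, so satisfiability reduces to a single \pspace test of $\varphi_{\mathrm{el}}$).

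For correctness I would argue both directions via closure of word-equation solutions under homomorphisms. Completeness: a solution $(\mu,\nu)$ over $D$ induces an equality pattern $\approx$ on $A \cup C$ witnessed by $\mu$, and applying the letter-to-letter homomorphism $D^* \to \Gamma'^*$ that maps each letter in $\mu(A) \cup C$ to the class of the matching constant and every other element of $D$ to an arbitrary fixed letter of $\Gamma'$ carries $\nu$ to a solution of the word-equation system. Soundness: from a word-equation solution $\sigma'$ over $\Gamma'$ and a $\mu$ realising the verified pattern, lifting $\sigma'$ back to $D^*$ by replacing $[a]$ with $\mu(a)$ and $[c]$ with $c$ yields, together with $\mu$, a solution of the original system. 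The whole procedure runs in nondeterministic polynomial space, hence in \pspace.

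I expect the main obstacle to be conceptual rather than computational: making precise \emph{why} the absence of regular constraints collapses the exponential type space of Theorem~\ref{thm:sequence_constraints_expspace} — namely that for plain word equations the only letters that matter are the polynomially many constants occurring syntactically plus a single filler letter, and that projecting all of $D$ onto these never destroys a solution. The remaining care is bookkeeping: keeping the element-constraint consistency check polynomial and accounting correctly for concrete $D$-constants, and invoking a version of the word-equation decidability result that applies to a \emph{system} of equations over a fixed finite alphabet, not merely to a single equation.
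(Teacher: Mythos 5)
Your proposal is correct and follows essentially the same route as the paper's proof: guess (equivalently, iterate over) the equality pattern on the uninterpreted constants, check it for $T$-satisfiability together with the element constraints, and then solve the resulting word equations over the polynomial-size quotient alphabet. Your write-up is somewhat more explicit than the paper's about keeping concrete $D$-constants separate in the equivalence relation and about the homomorphism argument for correctness, but these are elaborations of the same idea rather than a different approach.
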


\section{Algorithm for straight-line formulas}
\label{sec:sl}

\OMIT{
Many  programs which manipulate lists, arrays, or streams use functions like map, split, filter etc.
In order to model such operations in our theory of sequences, one can employ parametric transducers.
}
It is known that adding finite transducers into word equations results in an
undecidable model (e.g.\ see \cite{LB16}). Therefore, we extend the 
\emph{straight-line restriction}~\cite{LB16,popl19} to sequences, and show that
it suffices to recover decidability for equational constraints, together with
regular and transducer constraints.
\OMIT{
However, instead of settling with an incomplete solver, we follow an approach inspired by solvers for the theory of strings~\cite{popl19}, and restrict ourselves to a fragment known as ``straight-line'' (SL), i.e.,
impose syntactic restrictions on our constraint language.
}
In fact, we will show that deciding problems in the straight-line fragment is
solvable in doubly exponential time and is \expspace-hard, if $T$ is solvable 
in \pspace.
It has been observed that the straight-line fragment for the theory of strings already covers many interesting benchmarks~\cite{LB16,popl19}, and similarly many properties of sequence-manipulating programs can be proven using the fragment, including the QuickSort example from Section~\ref{sec:mot_ex} and other benchmarks shown in Section~\ref{sec:impl}.

\subsubsection*{The Straight-line Fragment SL}

We start by defining recognizable formulas over sequences, followed by the syntactic and semantic restrictions on our constraint language.
This definition follows closely the definition of recognizable
relations over finite alphabets, except that we replace finite automata with
parametric automata.
\begin{definition}[Recognizable formula] \label{def:rec-rel} A
    formula $R(\Xseq_1,\ldots,\Xseq_r)$ is
  \emph{recognizable} if it is equivalent to a positive Boolean
    combination of regular constraints.  
    \OMIT{
    A \emph{representation} of a
  recognizable relation $R$ is a family
  $(\Aut^{(i)}_1, \dots,\Aut^{(i)}_r )_{1 \leq i \leq n}$ of 
  parametric automata~$\Aut^{(i)}_j$, such that
  $(\Xseq_1,\dots, \Xseq_r) \in R$ iff there is $i$ with
  $1 \leq i \leq n$ and $\Aut^{(i)}_j(\Xseq_j)$ for
  $j = 1, \ldots, r$. The tuples $(\Aut^{(i)}_1, \dots,\Aut^{(i)}_r )$
  are called the \emph{disjuncts} of the representation, and the
  parametric automata $\Aut^{(i)}_j$ are called the \emph{atoms}.
  }
\end{definition}
Note that this is simply a generalization of regular constraints to multiple
variables, i.e., 1-ary recognizable formula can be turned into a regular
constraint, which is closed under intersection and union.

To define the straight-line fragment, we use the approach of \cite{popl19};
that is, the fragment is defined in terms of ``feasibility of a symbolic 
execution''. Here, a \emph{symbolic execution} is just a sequence of assignments and 
assertions, whereas the \emph{feasibility} problem amounts to deciding whether
there are concrete values of the variables so that the symbolic execution can be
run and none of the assertions are violated. We now make this intuition
formal. A symbolic execution is syntactically generated by the following
grammar:
\begin{align}
	&S ~~::=~~ \Yseq :=f(\Xseq_1,\dots, \Xseq_k, \params) \;|\; \textbf{assert}(R(\Xseq_1,\dots,\Xseq_r)) \;|\; \textbf{assert}(\phi) \;|\;S;S  &
\end{align}
where $f: (D^*)^k \times D^{|\params|} \to D$ is a function, $R$ are
recognizable formulas,
and $\phi$ are element constraints. 
\OMIT{
Assignments
to sequence variables in an SL formula have to be sorted: in a formula
$S_1; \ldots; S_{i-1}; \Yseq := f(\ldots); S_{i+1}; \ldots; S_n$, the
variable~$\Yseq$ can occur in the suffix~$S_{i+1}; \ldots; S_n$, but
must not occur in the prefix $S_1; \ldots; S_{i-1}$.
}

The symbolic execution $S$ can be turned into a sequence constraint as follows.
Firstly, we can turn $S$ into the standard \emph{Static Single Assignment (SSA)}
form by means of introducing new variables on the left-hand-side of an 
assignment. For example, $\Yseq := f(\Xseq); \Yseq := g(\Zseq)$ becomes
$\Yseq := f(\Xseq_1); \Yseq' := g(\Zseq)$. Then, in the resulting constraint,
each
variable appears \emph{at most once} on the left-hand-side of an assignment. That way,
we can simply replace each assignment symbol $:=$ with an equality symbol $=$.
We then treat each sequential composition as the conjunction operator $\wedge$
and assertion as a conjunct. Note that individual assertions are already sequence
constraints.
Next, we define how an interpretation $\mu$ satisfies the constraint 
$\Yseq = f(\Xseq_1,\ldots,\Xseq_r,\params)$:
\[
    \mu \models \Yseq = f(\Xseq_1,\ldots,\Xseq_r,\params) \quad \text{iff} \quad
    \mu(\Yseq) = f(\mu(\Xseq_1),\ldots,\mu(\Xseq_r),\mu(\params)).
\]
Note that '=' on the l.h.s. is syntactic, while the '=' on the r.h.s. is in the
metalanguage. The definition of the semantics of the language is now inherited
from Section \ref{sec:model}.

\OMIT{
For an assignment $\mu$ and every function assignment $\Yseq :=f(\Xseq_1,\dots, \Xseq_k, \params)$ we obtain a sequence constraint which is an equality $\Yseq :=f(\Xseq_1,\dots, \Xseq_k)$ where $\mu \models \Yseq :=f(\Xseq_1,\dots, \Xseq_k)$.
Every $\textbf{assert}(\phi)$ is an element constraint with $\mu \models \phi$. Finally, for $R(\Xseq_1,\dots,\Xseq_r))$ we obtain a Boolean combination of regular constraints. We say that $S$ is satisfiable iff the conjunction of all sequence constraints is satisfiable.
}

\OMIT{
\begin{description}
    \item[(RegMonDec)]
	For each $\textbf{assert}(R(\Xseq_1,\dots, \Xseq_r))$ in $S$, it is possible
    to compute an equivalent formula to $R$ that is recognizable.
\end{definition}
In particular, \textbf{RegMonDec} is satisfied when 
        $R(\Xseq_1, \dots, \Xseq_r)$ is
recognizable, and for $r = 1$ it may be given as an regular constraint
$\mathcal{A}(\Xseq_1)$.
}
In addition to the syntactic restrictions, we also need a semantic
condition: in our language, we only permit functions~$f$ such that
the pre-image of each regular constraint under $f$ is effectively a
recognizable formula:
        \begin{description}
            \item[(RegInvRel)] A function~$f$ is permitted if for each regular
                constraint $\Aut( \Yseq)$, it is possible to compute
                a recognizable formula that is equivalent to the formula
                $\exists \Yseq:
                \Aut(\Yseq) \wedge \Yseq = f(\Xseq_1,\ldots,\Xseq_r,\params)$.
        \end{description}

        Two functions satisfying (RegInvRel) 
%
        are the concatenation function $\Xseq := \Yseq.\Zseq$ (here $\Yseq$
        could be the same as $\Zseq$) and
        parametric transducers~$\Yseq := \Tra(\Xseq)$.
        We will only use these two functions in the paper, but the result is
        generalizable to other functions.

\begin{proposition}\label{prop:concat}
	Given a regular constraint $\Aut( \Yseq )$ and a
    constraint $\Yseq = \Xseq.\Zseq$, we can compute a recognizable
    formula $\psi(\Xseq,\Zseq)$ equivalent to
    $\exists \Yseq: \Aut(\Yseq) \wedge
	\Yseq = \Xseq.\Zseq$. Furthermore, this can be achieved in polynomial 
    time.
\end{proposition}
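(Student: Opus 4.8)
The plan is to generalise the classical automaton construction that splits an accepting run at an intermediate state. I would recall that over a finite alphabet, for an NFA $A$ with state set $Q$, a word $uv$ is accepted iff there is a state $q$ such that $A$ can move from its initial state to $q$ while reading $u$, and from $q$ to a final state while reading $v$; hence $\{(u,v) : uv \in L(A)\}$ is the finite union over $q \in Q$ of products of regular languages, which is recognizable. I would carry this over to parametric automata, being careful to force the two ``halves'' to agree on the parameter valuation.

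Concretely, let $\Aut = (\params,\controls,\transrel,q_0,\finals)$. For every $q \in \controls$ I define two parametric automata, both over the same parameter set $\params$ and the same transition relation $\transrel$: $\Aut^{\mathrm{in}}_q := (\params,\controls,\transrel,q_0,\{q\})$, which keeps the initial state of $\Aut$ but makes $q$ its unique final state, and $\Aut^{\mathrm{out}}_q := (\params,\controls,\transrel,q,\finals)$, which keeps the final states of $\Aut$ but makes $q$ its initial state. Then I set
\[
  \psi(\Xseq,\Zseq) ~:=~ \bigvee_{q \in \controls} \bigl(\Aut^{\mathrm{in}}_q(\Xseq) \wedge \Aut^{\mathrm{out}}_q(\Zseq)\bigr).
\]
This is a positive Boolean combination of regular constraints, hence recognizable, it has size $\mathcal{O}(|\controls|\cdot|\Aut|)$, and it is computable in polynomial time, since each $\Aut^{\mathrm{in}}_q$ and $\Aut^{\mathrm{out}}_q$ is obtained from $\Aut$ only by redesignating the initial/final states.

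The correctness claim I would prove is: for every assignment $\mu$, we have $\mu(\Xseq).\mu(\Zseq) \in L_\mu(\Aut)$ iff $\mu \models \psi(\Xseq,\Zseq)$, which is precisely $\mu \models \exists \Yseq.\, \Aut(\Yseq) \wedge \Yseq = \Xseq.\Zseq$. For the forward direction, I would take an accepting run of $\Aut$ on $\mu(\Xseq).\mu(\Zseq) = (d_1,\dots,d_n)$ under parameter valuation $\mu(\params)$ and let $q$ be the state reached after exactly $|\mu(\Xseq)|$ transitions; the prefix of the run is then an accepting run of $\Aut^{\mathrm{in}}_q$ on $\mu(\Xseq)$ and the suffix an accepting run of $\Aut^{\mathrm{out}}_q$ on $\mu(\Zseq)$, both under the same valuation $\mu(\params)$, so the guard conditions $T \models \varphi_i(d_i,\mu(\params))$ carry over verbatim to the two sub-runs. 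The converse direction simply concatenates the two sub-runs back into one run of $\Aut$, which again satisfies all its guards because both halves used the valuation $\mu(\params)$. The degenerate cases $\mu(\Xseq)=\epsilon$ and $\mu(\Zseq)=\epsilon$ are covered by the choices $q = q_0$ and $q \in \finals$ respectively.

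The one point that genuinely requires care — and is really the crux rather than a deep obstacle — is that $\Aut^{\mathrm{in}}_q$ and $\Aut^{\mathrm{out}}_q$ must \emph{share} the parameter set $\params$: a single assignment $\mu$ valuates $\params$ once, so the conjunction $\Aut^{\mathrm{in}}_q(\Xseq) \wedge \Aut^{\mathrm{out}}_q(\Zseq)$ correctly expresses ``there is one parameter valuation witnessing acceptance of both halves'', which is exactly what splitting a run of $\Aut$ needs; making the two parameter sets disjoint would yield a strictly weaker, unsound formula. This is compatible with, but separate from, the disjointness normal form of Lemma~\ref{lm:reduce}: the sharing here occurs inside a single recognizable formula and is intrinsic to the pre-image computation. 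No further subtleties arise, so this completes the argument.
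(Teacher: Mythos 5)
Your proposal is correct and is essentially the paper's own argument: the paper defers to the standard string-case construction (splitting an accepting run at an intermediate state $q$ and taking the disjunction over $q$ of $\Aut$ with redesignated initial/final states), which is exactly the decomposition it also uses explicitly in the worked example of Section~\ref{sec:sl} via the automata $\Aut^{p,\finals'}$. Your additional observation that the two halves must share the parameter set $\params$ is the right (and only) point of care when lifting the construction from NFAs to parametric automata.
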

The proof of this proposition is exactly the same as in the case of strings,
e.g., see \cite{popl19,LB16}.
\begin{proposition}\label{prop:pre-image}
	Given a regular constraint $\Aut( \Yseq )$ and a parametric transducer 
	constraint $\Yseq = \Tra(\Xseq)$, we can compute a regular constraint
	$\Aut'( \Xseq )$ that is equivalent to $\exists \Yseq: \Aut(\Yseq) \wedge
	\Yseq = \Tra(\Xseq)$. This can be achieved in exponential time.
\end{proposition}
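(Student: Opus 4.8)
The plan is to generalize the classical construction showing that the pre-image of a regular language under a (nondeterministic) rational transduction is again regular, taking care of two extra features: the parameters, and the fact that a single transition of $\Tra$ may emit a whole block of output letters, all obtained by instantiating $\curr$ with the \emph{same} input letter. I would build $\Aut'$ as a product of $\Tra$ and $\Aut$ whose parameter set is the union of those of $\Tra$ and $\Aut$; since parameters are globally shared uninterpreted constants, any parameter common to $\Tra$ and $\Aut$ is automatically identified, which is exactly the intended meaning of $\exists \Yseq:\, \Aut(\Yseq)\wedge \Yseq=\Tra(\Xseq)$.

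Concretely, the control states of $\Aut'$ would be $\controls_\Tra\times\controls_\Aut$, with initial state the pair of initial states and final states $\finals_\Tra\times\finals_\Aut$. Transitions let $\Tra$ and $\Aut$ run in lockstep on the input $\Xseq$: when $\Tra$ makes one step on the current letter $\curr$ emitting a block of $k$ output letters, $\Aut$ must traverse a path of length $k$. So for a transition $(p,(\varphi,\Wseq),q)$ of $\Tra$ with $\Wseq=(t_1,\dots,t_k)$, $k\ge 1$, and for every path $s_0\tran{\psi_1}s_1\tran{\psi_2}\cdots\tran{\psi_k}s_k$ in $\Aut$, I would add to $\Aut'$ a transition from $(p,s_0)$ to $(q,s_k)$ with guard
\[
  \varphi(\curr,\params_\Tra)\ \wedge\ \bigwedge_{j=1}^{k}\psi_j\bigl(t_j(\curr,\params_\Tra),\,\params_\Aut\bigr),
\]
where $\psi_j\bigl(t_j(\curr,\params_\Tra),\params_\Aut\bigr)$ is $\psi_j$ with its current-position constant replaced by the $T$-term $t_j(\curr,\params_\Tra)$ --- this is still a legal guard because first-order theories are closed under substituting terms for constants. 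For a transition of $\Tra$ with empty output ($k=0$) I would instead add, for every $s\in\controls_\Aut$, a transition from $(p,s)$ to $(q,s)$ guarded only by $\varphi(\curr,\params_\Tra)$, since one input letter is consumed but nothing is emitted.

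For correctness I would show that, for every assignment $\mu$ of the parameters and of $\Xseq$, $\mu(\Xseq)\in L_\mu(\Aut')$ holds iff there is $s\in D^*$ that is a possible output of $\Tra$ on $\mu(\Xseq)$ under $\mu$ with $s\in L_\mu(\Aut)$. This is a routine transition-by-transition decomposition: an accepting run of $\Aut'$ on $(d_1,\dots,d_n)$ splits into an accepting run of $\Tra$ on $(d_1,\dots,d_n)$ producing some output $w$ together with an accepting run of $\Aut$ on $w$, and conversely; the only point that needs the specific shape of the guard above is that all letters of the $i$-th output block are obtained by instantiating $\curr$ with the same $d_i$, matching the semantics of transducer constraints (which uses the map sending $\curr$ to $d_i$ for the whole block $\Wseq_i$). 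For the complexity: if $m$ is the maximal length of an output sequence occurring in $\Tra$, then $m$ is bounded by $|\Tra|$, and for each transition of $\Tra$ one enumerates the at most $|\controls_\Aut|^{m+1}$ paths of length $\le m$ in $\Aut$, each producing one guard of polynomial size; hence $\Aut'$ has exponentially many transitions and is computable in exponential time. (One may instead, for each pair of control states, fold all witnessing paths into one disjunctive guard, obtaining polynomially many transitions with exponentially large guards --- either way exponential time.)

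The step I expect to be the genuine obstacle, and the source of the exponential blow-up, is exactly the need to simulate a whole $\Aut$-path against one transducer transition \emph{within a single step} of $\Aut'$: parametric automata have no $\varepsilon$-moves, and the (possibly infinite-domain) letter $d_i$ can be stored neither in a finite control state nor in a parameter (parameters are fixed for the whole run), so this path cannot be unrolled over several input positions and must be folded into a single guard --- which forces the path enumeration above.
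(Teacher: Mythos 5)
Your proposal is correct and takes essentially the same route as the paper's proof: both form the product automaton on $\controls_\Tra\times\controls_\Aut$ and, for each transducer transition with output block $\Wseq$, enumerate the length-$|\Wseq|$ paths in $\Aut$ and conjoin the transducer guard with the automaton guards after substituting the output terms for $\curr$, which gives exponentially many transitions and hence exponential time. Your explicit handling of empty output blocks and the correctness discussion are just elaborations of the same construction.
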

The construction in Proposition \ref{prop:pre-image} is essentially the same 
as the pre-image
computation of a symbolic automaton under a symbolic transducer \cite{veanes12}.
The complexity is exponential in the maximum number of output symbols of a 
single transition (i.e. the maximum length of $\Wseq$ in the transducer), which
is in practice a small natural number. \ifextendedtr For completeness, we have added the proof
in the appendix.\fi



The following is our main theorem on the SL fragment with equational
constraints, regular constraints, and transducers.
\begin{theorem} 
	If $T$ is solvable in \pspace, then the SL fragment with concatenation and
    parametric transducers over $T$ is in 
    2-{\exptime} and is \expspace-hard.
    \label{thm:sl-expspace}
\end{theorem}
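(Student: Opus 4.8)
The plan is to establish the upper bound by a saturation / back-propagation argument over the straight-line structure, and the lower bound by adapting the known \expspace-hardness of intersection-nonemptiness for parametric automata (Theorem~\ref{thm:we-expspace}). For the upper bound, first I would put the SL formula into SSA form as described in Section~\ref{sec:sl}, so that every sequence variable is assigned at most once and the assignments are topologically sorted: $\Yseq_1 := f_1(\ldots)$, \dots, $\Yseq_m := f_m(\ldots)$, together with assertions $\textbf{assert}(R_j(\ldots))$ and $\textbf{assert}(\phi_j)$. The key idea, exactly as in the string case \cite{popl19,LB16}, is to eliminate the assignments in reverse order. For the last-assigned variable $\Yseq_m = f_m(\Xseq_1,\ldots,\Xseq_k,\params)$: every occurrence of $\Yseq_m$ elsewhere is inside a regular constraint (since recognizable formulas are positive Boolean combinations of regular constraints, and the only functions permitted, concatenation and parametric transducers, satisfy (RegInvRel)). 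By Propositions~\ref{prop:concat} and~\ref{prop:pre-image}, each such regular constraint $\Aut(\Yseq_m)$ can be replaced by a recognizable formula over $\Xseq_1,\ldots,\Xseq_k,\params$ that is equivalent to $\exists \Yseq_m.\, \Aut(\Yseq_m) \wedge \Yseq_m = f_m(\ldots)$; a positive Boolean combination of regular constraints closes under these operations. After doing this for all regular constraints mentioning $\Yseq_m$, the variable $\Yseq_m$ disappears, and we iterate. When all assignments are eliminated we are left with a conjunction of regular constraints, element constraints, and equational constraints, which is exactly the setting of Section~\ref{sec:we}; by Theorem~\ref{thm:sequence_constraints_expspace} this is decidable in \expspace\ assuming $T$ is in \pspace.

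The complexity bookkeeping is where one must be careful, and I expect it to be the main obstacle. Each application of Proposition~\ref{prop:pre-image} (pre-image under a parametric transducer) incurs an exponential blow-up in automaton size; since there can be linearly many transducer assignments chained in the SL formula, naive iteration would give a tower of exponentials rather than the claimed $2$-\exptime. The fix, again mirroring the string argument, is that the straight-line discipline bounds the \emph{depth} of nesting: each regular constraint is propagated backwards along a path in the assignment DAG, but the DAG has the property that any variable is the output of at most one transducer/concatenation, so a given initial automaton is transformed at most once per assignment on a root-to-leaf path, and crucially concatenation (Proposition~\ref{prop:concat}) is only polynomial. One then argues that only the transducer steps cause exponential growth, and bounds the number of transducer steps on any propagation path by the number of transducer assignments; with the output-length parameter of transducers being a fixed constant, the total blow-up is a single exponential in the input size, hence the resulting word-equation-with-regular-constraints instance is of exponential size, and running the \pspace\ word-equation algorithm \cite{wordeqgroups} on it (together with the exponential-size type formula~\eqref{eq:parameters_type_verification} check, whose $\tau$ is at most exponential) gives $2$-\exptime\ overall. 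I would make this precise by a potential-function / induction-on-number-of-assignments argument showing the invariant that after eliminating the first $i$ assignments the formula has size at most singly-exponential in the original.

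For the \expspace-hardness, the plan is to reduce from the \expspace-complete problem of Theorem~\ref{thm:we-expspace}: nonemptiness of the intersection $L_\mu(\Aut_1) \cap \cdots \cap L_\mu(\Aut_t)$ of parametric automata over a theory decidable in \pspace. This already sits inside the SL fragment with no transducers and no concatenation at all: introduce one sequence variable $\Xseq$, assert $\Aut_1(\Xseq) \wedge \cdots \wedge \Aut_t(\Xseq)$ (a conjunction of regular constraints, hence a legal recognizable assertion), and add nothing else; the formula is satisfiable iff the intersection is nonempty. Since the SL fragment subsumes this instance verbatim and the element theory used there is in \pspace, the SL satisfiability problem is \expspace-hard. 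The remaining routine check is that the SSA/elimination procedure and the final call into the Section~\ref{sec:we} machinery are all correct, i.e.\ that the eliminated formula is equisatisfiable with the original — this follows step by step from the equivalences asserted in Propositions~\ref{prop:concat} and~\ref{prop:pre-image} and the definition of the semantics of $\Yseq = f(\ldots)$ given in Section~\ref{sec:sl}.
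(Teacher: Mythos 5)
Your elimination strategy for the upper bound and your reduction for the lower bound are essentially the paper's: eliminate assignments in reverse SSA order using Propositions~\ref{prop:concat} and~\ref{prop:pre-image}, and obtain \expspace-hardness directly from Theorem~\ref{thm:we-expspace} by observing that a conjunction of regular constraints on a single variable is already an SL instance. The genuine gap is in your complexity bookkeeping. You claim that the total blow-up is a \emph{single} exponential, so that the residual instance has exponential size and a \pspace{} procedure on it gives 2-\exptime. This claim is false, and your argument for it does not work: even if the output length $n$ of each transducer transition is treated as a constant, chaining $k$ pre-image computations along a root-to-leaf path of the assignment DAG multiplies the transition count roughly as $N \mapsto N^{n}$ at each step, yielding $N^{n^{k}}$ transitions after $k$ steps --- doubly exponential in the input, not singly. (Your fear of a ``tower of exponentials'' is also misplaced; the composition of exponentials here caps at doubly exponential precisely because each intermediate guard is a conjunction, i.e.\ a \emph{set}, of formulas $\psi(t(\curr))$ where $\psi$ ranges over the original guards and $t$ over compositions $t_1\cdots t_m$ of original output terms with $m$ bounded by the number of assignments; this universe of conjuncts has singly exponential size, so there are at most doubly exponentially many distinct transitions.)

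The paper's analysis accepts the doubly exponential transition count and still lands in 2-\exptime{} by tracking states and transitions separately: transducer pre-images increase the number of \emph{states} only polynomially and concatenation not at all, so every intermediate automaton has at most exponentially many states. After all assignments are eliminated one is left with polynomially many regular constraints (no equations), whose product has exponentially many states; one guesses a simple path of at most exponential length, producing a $T$-formula of exponential size, which is checked in \expspace{} given $T \in \pspace$, and the exponentially many disjunct choices from concatenation pre-images are enumerated deterministically --- all within 2-\exptime. Your route instead funnels the residual instance back through the Section~\ref{sec:we} machinery (word equations plus the type formula~\eqref{eq:parameters_type_verification}); applied to an input that is in fact doubly exponential, that machinery gives 2-\expspace, not 2-\exptime. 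To repair your proof you should drop the single-exponential claim, bound states and transitions separately as above, and replace the final word-equation step by a direct product-and-simple-path emptiness check on the remaining regular constraints.
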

\begin{proof}
	We give a decision procedure\ifextendedtr; complexity analysis is given in the appendix\fi.
    We assume
    that $S$ is already in SSA (i.e. each variable appears at most once on the
    left-hand side).
    Let us assume that $S$ is of the form $S';\Yseq:= f(\Xseq_1,...\Xseq_r)$,
    for some symbolic execution $S'$.
    Without loss of generality, we may assume
    that each recognizable constraint is of the form $\Aut(
    \Xseq)$. This is no limitation: (1) since each $R$ in the
    assertion is a recognizable formula, we simply have to
    ``guess'' one of the implicants for each $R$, and (2) $\textbf{assert}(\psi_1
    \wedge \psi_2)$ is equivalent to $\textbf{assert}(\psi_1);
    \textbf{assert}(\psi_2)$.

    Assume now that $\{\Aut_1(\Yseq),\ldots,\Aut_m(\Yseq)\}$ are all the regular
    constraints on $\Yseq$ in $S$. By our assumption, it is possible to compute
    a recognizable formula equivalent to
    \[
        \psi(\Xseq_1,\ldots,\Xseq_r) := \exists \Yseq: 
            \bigwedge_{i=1}^m \Aut_i(\Yseq) \wedge \Yseq =
                f(\Xseq_1,\ldots,\Xseq_r).
    \]
    There are two ways to see this. The first way is that regular constraints
    are closed under intersection. This is in general computationally quite
    expensive because of a product automata construction before applying the
    pre-image computation. A better way to do this is to observe that 
    $\psi$ is
    equivalent to the conjunction of $\psi_i$'s over $i=1,\ldots,m$, where 
    \[
        \psi_i := \exists \Yseq: \Aut_i(\Yseq) \wedge \Yseq =
        f(\Xseq_1,\ldots,\Xseq_r).
    \]
    By our semantic condition, we can compute recognizable formulas
    $\psi_i',\ldots,\psi_m'$ equivalent to $\psi_1,\ldots,\psi_m$ respectively.
    Therefore, we simply replace $S$ by 
    \[
        S';\textbf{assert}(\psi_1');\cdots;\textbf{assert}(\psi_m'),
    \]
    in which every occurrence of $\Yseq$ has been completely eliminated.
    Applying the above variable elimination iteratively, we obtain a
    conjunction of regular constraints. We now end up with a conjunction of
    regular constraints and element constraints, which as we saw
    from Section \ref{sec:we} is decidable.
	%
	%
	\qed
\end{proof}

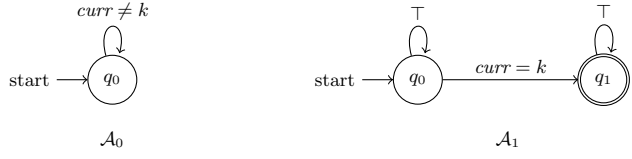
\begin{figure}[tb]
	\centering
	\begin{tikzpicture}[scale=1, every node/.style={scale=0.8}]

\node[state, initial] (1) {$q_0$};
\draw 
(1) edge[->,loop above] node{$\curr \neq k $} (1);
\node[ below of=1] {$\mathcal{A}_0$};

%
	
	\node[state, initial, right=of 1, xshift = 3cm] (6) {$q_0$};
	\node[state, right=of 6, xshift = 1cm, accepting] (7) {$q_1$};
	\draw 
	(6) edge[->,loop above] node{$\top$} (6)
	(6) edge[->, above] node{$\curr = k$} (7)
	(7) edge[->,loop above] node{$\top$} (7);
	\node[ below of=6, xshift= 1.5cm] {$\mathcal{A}_1$};

\end{tikzpicture}

	\caption{$\Aut_0$ accepts all words not containing $k$ and $\Aut_1$ accepts all words containing $k$.}	\label{fig:product-automaton}
\end{figure}

\begin{example}
	We consider the example from Section~\ref{sec:mot_ex} where a weaker form of the permutation property is shown for QuickSort.
	The formula that has to be proven is a disjunction of straight-line formulas and in the following we execute our procedure only on one disjunct without redundant formulas:
    \[\textbf{assert}(\Aut_0(\mathbf{left}'));\textbf{assert}(\Aut_0(\mathbf{right}'));
    \mathbf{res} = \mathbf{left}' \,.\, [\mathbf{l}_0] \,.\, \mathbf{right}'; 
    \textbf{assert}(\Aut_1(\mathbf{res})) \]

	We model $L(\Aut_1)$ as the language which accepts all words which contain one letter equal to $k$ and $L(\Aut_0)$ as the language which accepts only words not containing $k$, where $k$ is an uninterpreted constant, so a single element.
	See	Figure \ref{fig:product-automaton}.
	We begin by removing the operation  $\mathbf{res} = \mathbf{left}' \,.\,
    [\mathbf{l}_0] \,.\, \mathbf{right}'$. The product automaton for all
    assertions that contain $\mathbf{res}$ is just $\Aut_1$. Hence, we can
    remove the assertion $\textbf{assert}( \Aut_1(\mathbf{res}))$. 
	 The concatenation function $.$ satisfies \textbf{RegInvRel} and the
     pre-image~$g$ can be represented by 
     \[
         \bigvee_{0\leq i,j\leq 1} \Aut_1^{q_0,\{q_i\}}(\mathbf{left'}) \wedge 
            \Aut_1^{q_i, \{q_j\}}([\mathbf{l}_0]) \wedge 
            \Aut_1^{q_j, \{q_1\}}(\mathbf{right}'),
    \]
    where $\Aut_i^{p,\finals'}$ is $\Aut_i$ with start state set to $p$ and finals to $\finals'$.
	 
	In the next step, the assertion $g$ is added to the program and all assertions containing $\mathbf{res}$ and the concatenation function are removed.
	\begin{gather*}
        \textbf{assert}(\Aut_0(\mathbf{left}'));\textbf{assert}(\Aut(\mathbf{right}'));\textbf{assert}(g(\mathbf{left}',[\mathbf{l}_0], \mathbf{right}'))
	\end{gather*}	
	From here, we pick a tuple from $g$,  lets say $i = j = 1$, and obtain	
	\begin{gather*}
        \textbf{assert}(\Aut_0(\mathbf{left}'));\textbf{assert}(\Aut_0(\mathbf{right}'));
        \textbf{assert}(\mathbf{left}' \in \Aut_1^{q_0,\{q_1\}});\\
        \textbf{assert}([\mathbf{l}_0] \in
        \Aut_1^{q_1,\{q_1\}});\textbf{assert}(\mathbf{right}' \in
        \Aut_1^{q_1,\{q_1\}})
	\end{gather*}
	
    Finally, the product automata $\Aut_0 \times \Aut_1^{q_0,\{q_1\}}$ and
    $\Aut_0 \times \Aut_1^{q_0,\{q_1\}}$ are computed for the variables
    $\mathbf{left}', \mathbf{right}'$ and a non-emptiness check over the product
    automata and the automaton for $[\mathbf{l}_0]$ is done.
	The procedure will find no combination of paths for each automaton which can
    be satisfied, since $\mathbf{left}'$ is forced to accept no words containing
    $k$ by $\Aut_0$ and only accepts by reading a $k$ from $\Aut_1^{q_0,\{q_1\}}$.
    Next, the procedure needs to exhaust all tuples from $(\Aut_1^{q_0,\{q_i\}},
    \Aut_1^{q_i, \{q_j\}}, \Aut_1^{q_j, \{q_1\}})_{0 \leq i,j \leq 1}$ before it is proven that this disjunct is unsatisfiable.
	
\end{example}

\section{Extensions and undecidability}
\label{sec:ext}
\subsubsection*{Length constraints}

We consider the extension of our model by allowing \emph{length-constraints} on the sequence variables:
for each sequence variable $\Xseq$ we consider the associated length variable $\ell_\Xseq$,
let the set of length variables be $\mathcal L =\{ \ell_\Xseq \: : \: \Xseq \in \mathcal V\}$,
we extend $\mu$ to $\mathcal L$, it assigns natural numbers to them.
The length constraints are of the form $\sum_{\Xseq} a_\Xseq \ell_\Xseq ? 0$,
where $? \in \{<, \leq,  =, \neq , \geq , >\}$ and each $a_\Xseq$ is an integer constant,
i.e., linear arithmetic formulas on the length-variables.
The semantics is natural: we require that $|\mu(\Xseq)| = \mu (\ell_\Xseq)$ (the assigned values are the true lengths of sequences)
and that $\mu(\mathcal L)$ satisfies each length constraint.

There is, however, another possible extensions:
if we the theory $T_\struct$ is the Presburger arithmetic, then the parameter automata
could use the values $\ell_\Xseq$.
We first deal with a more generic, though restricted case, when this is not allowed:
then all reductions from Section~\ref{sec:we} generalize
and we can reduce to the word equations with regular and length constraints.
However, the decidability status of this problem is unknown.
When we consider Presburger arithmetic and allow the automata to employ the length variables,
then it turns out that we can interpret the formula~\eqref{eq:parameters_type_verification}
as a collection of length constraints, and again we reduce to word equations with regular and length constraints.

\paragraph{Automata oblivious of lengths}
We first consider the setting, in which the length variables $\mathcal L$ can only be used in length constraints.
It is routine to verify that the reduction from Section~\ref{sec:we} generalize to the case of length constraints:
it is possible to first fix $\mu$ for parameters, calling it again $\pi$.
Then Lemma~\ref{lem:finite_domain_subset_satisfiability}
shows that each solution $\mu$ can be mapped by a letter-to-letter homomorphism to a finite alphabet $D_\pi$,
and this mapping preserves the satisfiability/unsatisfiability of length constraints,
so Lemma~\ref{lem:finite_domain_subset_satisfiability} still holds when also length constraints are allowed.
Similarly, Lemma~\ref{lem:parametric_to_regular} is also not affected by the length constraints
and finally Lemma~\ref{lem:parameters_type_verification} deals with regular and equational constraints,
ignoring the other possible constraints and the length of substitutions for variables are the same.
Hence it holds also when the length constraints are allowed
then the resulting word equations use regular and length constraints.

Unfortunately, the decidability of word equations with linear length constraints (even without regular constraints)
is a notorious open problem.
Thus instead of decidability, we get Turing-equivalent problems.

\begin{theorem}
\label{thm:length_constraints_simple}
Deciding regular, equational and length constraints for $T$-sequences of a decidable theory $\theory$
is Turing-equivalent to word equations with regular and length constraints.
\end{theorem}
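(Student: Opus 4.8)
The plan is to prove the Turing-equivalence by establishing the two reductions separately, reusing the pipeline of Section~\ref{sec:we} almost verbatim for the harder direction and giving an easy embedding for the other.

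\emph{From sequence constraints to word equations.} First I would check that every step of the reduction of Section~\ref{sec:we} is oblivious to length constraints, and, more importantly, that the length profile of a solution is preserved along the whole chain. Concretely: by Lemma~\ref{lm:reduce} we may assume the automata constraints have pairwise disjoint parameters, so we may fix the parameter assignment $\pi$. Lemma~\ref{lem:finite_domain_subset_satisfiability} supplies a letter-to-letter homomorphism $\psi : D^* \to D_\pi^*$ sending any solution $\mu$ to a solution over the finite alphabet $D_\pi$; since $\psi$ is letter-to-letter, $|\psi(\mu(\Xseq))| = |\mu(\Xseq)|$ for every sequence variable $\Xseq$, hence every linear length constraint holds under $\mu$ iff it holds under $\psi \circ \mu$. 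Lemma~\ref{lem:parametric_to_regular} then turns each parametric automaton into an ordinary NFA over $D_\pi$ without affecting lengths, and Lemma~\ref{lem:parameters_type_verification} reduces the equational and regular part — guessing a realizable type set $\tau$ and checking the $T$-formula~\eqref{eq:parameters_type_verification} using the assumed decision procedure for $\theory$ — to a word equation with regular constraints over the $|\tau|$-letter alphabet. Carrying the untouched length constraints along, we obtain a word equation with regular \emph{and} length constraints. As there are only (at most exponentially) many candidate $\tau$, the original instance is satisfiable iff one of the produced word-equation-with-length instances is; using the decision procedure for $\theory$ to discard the $\tau$'s that fail \eqref{eq:parameters_type_verification}, this is a Turing reduction from the sequence problem to word equations with regular and length constraints.

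\emph{From word equations to sequence constraints.} For the converse, a word equation with regular and length constraints over a finite alphabet $\Gamma$ is directly an instance of the sequence logic: pick $|\Gamma|$ pairwise distinct elements $c_1,\dots,c_m \in D$ (available whenever $D$ is infinite; if $D$ is finite one first re-encodes $\Gamma$ over a two-letter alphabet, which is harmless for word equations), substitute each letter by the corresponding $c_i$, replace each NFA by a parametric automaton whose guard on a $c_i$-labelled transition is the parameter-free formula $\curr = c_i$, and keep the length constraints verbatim. Solutions over $\{c_1,\dots,c_m\}$ of the sequence instance are exactly the solutions of the word instance, so this is a (many-one) reduction.

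Combining the two directions gives the stated Turing-equivalence. The main obstacle is the first direction, and within it the bookkeeping to certify that the whole reduction chain of Section~\ref{sec:we} — the passage to $D_\pi$, the NFA construction of Lemma~\ref{lem:parametric_to_regular}, and the type-set guessing of Lemma~\ref{lem:parameters_type_verification} — genuinely commutes with length constraints; the point that makes it go through is precisely that $\psi$ from Lemma~\ref{lem:finite_domain_subset_satisfiability} is letter-to-letter, hence length-preserving, so no length information is altered anywhere in the process.
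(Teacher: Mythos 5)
Your proposal is correct and follows essentially the same route as the paper: the easy direction embeds word equations with regular and length constraints as a special case of sequence constraints, and the harder direction reruns the Section~\ref{sec:we} pipeline (fixing $\pi$, passing to $D_\pi$ via the letter-to-letter and hence length-preserving homomorphism of Lemma~\ref{lem:finite_domain_subset_satisfiability}, applying Lemmas~\ref{lem:parametric_to_regular} and~\ref{lem:parameters_type_verification}, and iterating over the candidate type sets $\tau$ while using the decidability of $\theory$ to check~\eqref{eq:parameters_type_verification}). The key observation you single out --- that the homomorphism preserves lengths, so the length constraints commute with the entire reduction --- is exactly the point the paper's proof rests on.
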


\paragraph{Automata aware of the sequence lengths}
We now consider the case when the underlying theory $T_\struct$ is the Presburger arithmetic,
i.e.\ $\struct$ is the natural numbers and we can use addition, constants $0, 1$ and comparisons (and variables).
The additional functionality of the parametric automaton $\Aut$ is that
$\Delta \subseteqf \controls \times T(\curr,\params, \mathcal L) \times \controls$,
i.e.\ the guards can also use the length variables;
the semantics is extended in the natural way.

Then the type $\type_{\pi}(a)$ of $a \in \mathbb N$
now depends on $\mu$ values on $\params$ and $\mathcal L$,
hence we denote by $\pi$ the restriction of $\mu$ to $\params \cup \mathcal L$.
Then Lemma~\ref{lem:finite_domain_subset_satisfiability},
\ref{lem:parametric_to_regular} still hold, when we fix $\pi$.
Similarly, Lemma~\ref{lem:parameters_type_verification} holds,
but the analogue of \eqref{eq:parameters_type_verification} now uses also the length variables,
which are also used in the length constraints.
Such a formula can be seen as a collection of length constraints for original length variables $\mathcal L$
as well as length variables $\params \cup \{a_t \: : \: t \in \tau\}$.
Hence we validate this formula as part of the word equations with length constraints.
Note that $a_t$ has two roles: as a letter in $D_{\pi}$ and as a length variable.
However, the connection is encoded in the formula from the reduction (analogue of~\eqref{eq:parameters_type_verification})
and we can use two different sets of symbols.

\begin{theorem}\label{thm:ext-length}
Deciding conjunction of regular, equational and length constraints for sequences of natural numbers with Presburger arithmetic,
where the regular constraints can use length variables,
is Turing-equivalent to word equations with regular and (up to exponentially many) length constraints.
\end{theorem}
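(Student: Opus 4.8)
The plan is to mirror the proof strategy already developed for Theorem~\ref{thm:length_constraints_simple} and the ``automata oblivious of lengths'' case, but carefully tracking how the length variables thread through the three lemmas when the guards of the parametric automata are allowed to mention $\mathcal{L}$. First I would observe that the notion of type must be refined: since a guard now has the form $\phi(\curr,\params,\mathcal L)$, the set $\Phi$ of relevant atoms includes atoms over $\curr$, $\params$ \emph{and} $\mathcal L$, so the type $\type_\pi(a)$ of an element $a\in\mathbb N$ is determined only once we fix the values of \emph{both} $\params$ and $\mathcal L$; hence I set $\pi$ to be the restriction of $\mu$ to $\params\cup\mathcal L$. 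With this reading, Lemma~\ref{lem:finite_domain_subset_satisfiability} and Lemma~\ref{lem:parametric_to_regular} go through verbatim once $\pi$ is fixed: the letter-to-letter homomorphism $\psi:D^*\to D_\pi^*$ collapses each element to the representative of its type, and since lengths of sequences are preserved by a letter-to-letter homomorphism, the length constraints are preserved as well; the construction of the NFA $\Aut'$ from $\Aut$ is unchanged because, for fixed $\pi$, each guard $\phi(a,\pi(\params),\pi(\mathcal L))$ is simply a Boolean fact about $a$.

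Next I would redo the argument of Lemma~\ref{lem:parameters_type_verification}. Exactly as there, we guess the set $\tau\subseteq 2^\Phi$ of realizable types and reduce to word equations with regular constraints over the $|\tau|$-size alphabet $D_\pi$, but now the verification formula is the analogue of~\eqref{eq:parameters_type_verification} in which the existentials range over the fresh letter-variables $a_t$ for $t\in\tau$, over the parameters $\params$, \emph{and} over the length variables $\mathcal L$, and the body is $\bigwedge_{t\in\tau}\bigwedge_{\phi\in t}\phi(\params,\mathcal L,a_t)$. The crucial point is that, since $T_{\struct}$ is Presburger arithmetic, this is itself a linear-arithmetic formula over the integer variables $\params\cup\mathcal L\cup\{a_t : t\in\tau\}$ --- i.e.\ a system of (up to exponentially many, since $|\tau|$ can be exponential) length-type constraints --- which can simply be conjoined with the original length constraints and handed to an oracle for word equations with regular and length constraints. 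I would stress the mild subtlety (already flagged in the paragraph above the statement) that each $a_t$ plays a double role: as an alphabet symbol in $D_\pi$ feeding the word-equation/regular part, and as an integer-valued variable feeding the arithmetic part; but these two roles are linked only through the reduction formula, so we are free to use two disjoint copies of the symbols, one per role, and the reduction stays sound.

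For the converse direction of the Turing-equivalence I would note it is immediate: word equations with regular and length constraints are the special case in which $T_{\struct}$ is Presburger, there are no parameters, and no guard uses a length variable, so an oracle for the sequence problem solves the word-equation problem. Putting the two directions together yields the claimed Turing-equivalence, with the ``up to exponentially many'' clause accounting for the blow-up in $|\tau|$.

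I expect the main obstacle to be the bookkeeping in the refined Lemma~\ref{lem:parameters_type_verification}: one must argue that nothing is lost when $\mu$ realizes \emph{more} types than those in the guessed $\tau$ (enlarging the alphabet never invalidates a solution, just as in the oblivious case), and that the length constraints, the type-realizability formula, and the NFA transition structure are all simultaneously consistent for the \emph{same} choice of $\pi=\mu\restriction(\params\cup\mathcal L)$. The conceptual content is small once the definitions are lined up correctly, but getting the quantifier alternation in the reduction formula right --- existential over $a_t,\params,\mathcal L$, then the word-equation solver supplies the sequence assignments whose lengths must match the guessed $\mathcal L$ --- is the delicate part, and is exactly where the ``collection of length constraints'' viewpoint does the work.
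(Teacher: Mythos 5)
Your proposal is correct and follows essentially the same route as the paper's proof: redefine types relative to $\pi = \mu\restriction(\params\cup\mathcal L)$, observe that Lemmas~\ref{lem:finite_domain_subset_satisfiability} and~\ref{lem:parametric_to_regular} survive for fixed $\pi$, and reinterpret the analogue of~\eqref{eq:parameters_type_verification} as extra Presburger/length constraints over $\params\cup\mathcal L$ and the fresh per-type integer variables, with the letter role and the integer role of each type representative kept as two separate symbols linked only by the reduction formula. The only detail you gloss over is that the oracle's length constraints must refer to lengths of actual word variables, so the fresh integer variables need dummy word variables attached (the paper handles this with trivial equations $X=X$), but this is cosmetic.
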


\subsubsection*{Undecidability of register automata constraints}
\label{subsec:register_automata}
One could use more powerful automata for regular constraints;
one such popular model are register automata;
informally, such automaton has $k$ registers $r_1, \ldots, r_k$
and its transition depends on state and a value of formula using the registers and $\curr$: the read value~\cite{FJL22};
note that the registers can be updated: to $\curr$ or to one of register's values;
this is specified in the transition.
In ``classic'' register automata guards can only use equality and inequality between registers and $\curr$;
in SRA model more powerful atoms are allowed.
We show that sequence constraints and register automata constraints
(which use quantifier-free formulas with equality and inequality as only atoms, i.e.\ do not employ the SRA extension)
lead to undecidability (over infinite domain $D$).

\begin{theorem}
Satisfiability of equational constraints and register automata constraints, which use equality and inequality only,
over infinite domain, is undecidable.
\end{theorem}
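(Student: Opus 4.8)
The plan is to reduce from an undecidable problem on counter machines (Minsky machines), exploiting the fact that register automata over an infinite domain can enforce a strong structural discipline on sequences by means of equality and inequality tests, while the equational constraints of the sequence theory glue together several ``views'' of the same sequence. Concretely, I would fix a two-counter Minsky machine $M$ and build a system of sequence constraints that is satisfiable iff $M$ has a halting computation. The key idea is to encode a run of $M$ as a single sequence $\Xseq$ over $D$ in which distinct data values play the role of ``fresh names'' that can be compared later; since $D$ is infinite, we always have enough fresh values, and register automata can test for freshness indirectly by remembering a value in a register and checking (in)equality against it at a later position.

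First I would set up the encoding: a configuration of $M$ is written as a block listing the control state (encoded by a bounded gadget of marker values) followed by unary encodings of the two counters, where each ``unit'' of a counter is represented by a data value, and successive configurations reuse the \emph{same} data values for the units that are carried over, while an increment introduces a genuinely new value and a decrement drops one. The register automaton constraint $\Aut(\Xseq)$ checks all the \emph{local} consistency conditions of a computation step --- that consecutive blocks differ in the prescribed way according to the instruction attached to the current control state, that the increment value is distinct from all ``recent'' values it needs to be distinct from, and so on --- using its registers to carry a constant number of data values across a bounded window. The subtlety that a register automaton has only finitely many registers, and hence cannot by itself compare arbitrarily distant positions, is handled exactly as in classical undecidability proofs for models combining local automata with an equality/copy mechanism: I would use the equational constraints to assert $\Xseq = \Yseq$ together with regular constraints on $\Yseq$ expressing a ``shifted'' or ``interleaved'' copy, forcing value $i$ of one configuration to coincide with the corresponding value of the next configuration. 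In other words, word-equation-style constraints provide the global synchronisation that a single register automaton lacks.

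The main obstacle, and the part I would spend the most care on, is precisely this synchronisation step: register automata can propagate \emph{equalities} only through registers (bounded memory) or through the copy/update mechanism along the tape, so one must design the layout of the encoding so that every global identity needed (unit $j$ of configuration $c$ equals unit $j$ of configuration $c+1$) is realised either by an adjacency in $\Xseq$ that a bounded window can check, or by a concatenation/equation identity of the form $\Xseq = u \cdot \Xseq' $, $\Yseq = \Xseq' \cdot v$, $\Yseq = \Xseq$ (padding with fixed markers) that textually aligns the two copies. Getting the bookkeeping right --- ensuring the freshness of the value introduced by an increment is genuinely enforced rather than merely permitted, and that no ``spurious'' run of the automaton can cheat by reusing an old value where a new one is required --- is where the real work lies; everything else (the control-flow gadget, halting test, finitely many instruction cases) is routine.

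Finally I would argue both directions: from a halting run of $M$ one reads off an assignment $\mu$ to the sequence variables (picking pairwise distinct fresh elements of $D$ for the increment steps, which is possible since $D$ is infinite) that satisfies all equational and register-automata constraints; conversely, any satisfying assignment must, by the local checks of $\Aut$ plus the global equalities forced by the equations, decode into a genuine halting computation of $M$. Since the halting problem for two-counter machines is undecidable, satisfiability of equational constraints together with register-automata constraints using only equality and inequality is undecidable, which is the claim. I would remark that the construction uses no feature of $T$ beyond equality, so the result already holds over the alphabet theory of pure equality, matching the statement.
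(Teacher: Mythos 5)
There is a genuine gap at the exact point you flag as ``where the real work lies'': the global synchronisation of consecutive configurations is named but not actually constructed, and the tools you propose for it do not suffice. Equational (word-equation) constraints can only assert \emph{equality} of factors; an alignment of the form $\Xseq = u\cdot\Xseq'$, $\Yseq=\Xseq'\cdot v$, $\Yseq=\Xseq$ is a conjugacy equation and forces $\Xseq$ to be \emph{periodic} with period one configuration, i.e.\ it forces all configurations to be identical --- which is incompatible with a Minsky run, where counters must change from step to step. On the other side, a one-way nondeterministic register automaton with finitely many registers cannot check, for \emph{every} unit $j$, that unit $j$ of configuration $c$ matches unit $j$ of configuration $c+1$: each register load lets you chase one data value forward to its next occurrence, but you cannot return, and the constraint $\Aut(\Xseq)$ is an existential membership, so no universal quantification over positions or values is available to cover all $j$ simultaneously. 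An interleaved layout that makes corresponding units adjacent for one pair of configurations necessarily makes either the per-configuration structure (control state, increment/decrement) or the next pair of configurations non-local; this grid-transposition tension is precisely the obstacle, and your sketch does not resolve it.

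The paper sidesteps this by reducing from Hilbert's tenth problem rather than from counter machines, in a way that plays to what each ingredient \emph{can} do. A commutation equation $X''X' = X'X''$ legitimately forces $X' = (aX)^\ell$ (all copies identical --- here periodicity is wanted, not harmful), and a register automaton whose input has pairwise distinct letters counts the exponent by a ``diagonal'' pass: store letter $i$ of copy $i$, scan to its unique next occurrence in copy $i+1$, advance; acceptance after exactly $|X|$ passes pins $\ell=|X|$. Concatenation gives $|Z|=|X|+|Y|$, and a three-component variant of the pass construction on $(aXbYcZ)W = W(aXbYcZ)$ gives $|Z|=|X|\cdot|Y|$; together these encode Diophantine solvability. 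If you want to salvage your route, you would need to either find a layout in which every required cross-configuration identity is witnessed by a single register chase along uniquely-occurring data values (without ever needing a universal check), or switch, as the paper does, to a target problem whose structure matches the periodicity that word equations actually provide.
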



\section{Implementations, Optimizations and Benchmarks}
\label{sec:impl}

\subsubsection{Implementation}

We have implemented our decision procedure for problems in the constraint language SL for the theory of sequences in a new tool {\seqSolver} (Sequence Constraint Solver) on top of the SMT solver Princess \cite{princess}.
We extend a publicly available library for symbolic automata and transducers\cite{svpalib} to parametric automata and transducers by connecting them to the uninterpreted constants in our theory of sequences.
Our tool supports symbolic transducers, concatenation of sequences and reversing of sequences. Any additional function which satisfies \textbf{RegInvRel} such as a replace function which replaces only the first and leftmost longest match can be added in the future.

Our algorithm is an adaption of the tool OSTRICH \cite{popl19} and closely
follows the proof of Theorem \ref{thm:sl-expspace}.
To summarize the procedure, a depth-first search is employed to remove all functions in the given input and splitting on the pre-images of those functions.
When removing a function, new assertions are added to the pre-image constraints.
After all functions have been removed and only assertions are left a nonemptiness check is called over all parametric automata which encoded the assertions.
If the check is successful a corresponding model can be constructed, otherwise the procedure computes a conflict set and back-jumps to the last split in the depth search.\footnote{For a more detailed write-up of the depth-first search algorithm see OSTRICH\cite{popl19} Algorithm 1.}

\subsubsection{Benchmarks}
We have performed experiments on two benchmark suites.
The first one concerns itself with the verification of properties for programs manipulating sequences.
The second benchmark suite compares our tool against an algorithm using symbolic register automata \cite{svpalib} on decision procedures of regular expressions with back-references such as emptiness, equivalence and inclusion.

Both benchmark suites require universal quantification over the parameters;
there are existing methods for eliminating these universal quantifiers,
one such class are the \emph{semantically deterministic} (SD)~\cite{semantically_deterministic_PA} PAs;
despite its name, being SD is algorithmically checkable.
Most of considered the PAs are SD,
in particular all in benchmark suite 2.

Experiments were conducted on an AMD Ryzen 5 1600 Six-Core CPU with 16 GB of RAM running on Windows 10.
The results for second benchmark suite is shown Table \ref{tab:sra}. The timeout for all benchmarks is 300 seconds.

In the first benchmarks suite we are looking to verify a weaker form of the permutation property of sorting as shown in Section \ref{sec:mot_ex}.
Furthermore, we verify properties of two self-stabilizing algorithms for mutual exclusion on parameterized systems.
The first one is Lamport's bakery algorithm~\cite{lamport74}, for which we proved that the algorithm ensures mutual exclusion. 
The system is modelled in the style of regular model checking~\cite{RMC}, with
system states represented as words, here over an infinite alphabet:
the character representing a thread stores the thread control state, a Boolean flag, and an integer as the number drawn by the thread. The system transitions are modelled as parametric transducers, and invariants as parametric automata.
The second algorithm is known as Dijkstra's Self-Stabilizing Protocol \cite{dijkstra74}, in which system states are encoded as sequences of integers, and in which we verify that the set of states in which exactly one processor is privileged forms an invariant. The mentioned benchmarks require universal quantification, but similar to the motivating example from Section \ref{sec:mot_ex} one can eliminate quantifiers by Skolemization and instantiation which was done by hand.

The second benchmark suite consists of three different types of benchmarks, summarized in Table~\ref{tab:sra}.
The benchmark PR-C$n$ describes a regular expression for matching products which have the same code number of length $n$, and PR-CL$n$ matches not only the code number but also the lot number.
The last type of benchmark is IP-$n$, which matches $n$ positions of 2 IP addresses.
The benchmarks are taken from the regular-expression crowd-sourcing website RegExLib \cite{regex} and are also used in experiments for symbolic register automata\cite{SRA} which we also compare our results against.
To apply our decision procedure to the benchmarks,  we encode each of the benchmarks as a parametric automaton, using parameters for the (bounded-size) back-references.
The task in the experiments is to check emptiness, language equivalence, and language inclusion for the same combinations of the benchmarks as considered in \cite{SRA}.

\paragraph{Results of the Experiments}

\begin{table}[t]
	\centering
	\resizebox{\columnwidth}{!}{
		\begin{tabular}{|c c||c c|c c|c c|} 
			\hline
			$\mathcal{L}_1$&$\mathcal{L}_2$ & $ SRA_{\emptyset}(\mathcal{L}_1)$ &{\seqSolver}$_{\emptyset}(\mathcal{L}_1)$ & $ SRA_{\equiv}(\mathcal{L}_1)$ & {\seqSolver}$_{\equiv}(\mathcal{L}_1)$ & $SRA_{\subseteq}(\mathcal{L}_2,\mathcal{L}_1)$& {\seqSolver}$_{\subseteq}(\mathcal{L}_2,\mathcal{L}_1)$\\ [0.5ex] 
			\hline
			Pr-C2 & Pr-CL2 & 0.03s & 0.65s & 0.43s & 0.10s & 4.7s & 0.10s\\
			Pr-C3 & Pr-CL3 & 0.58s & 0.70s & 10.73s & 0.12s & 36.90s & 0.10s\\
			Pr-C4 & Pr-CL4 & 18.40s & 0.77s & 98.38s & 0.14s & - & 0.10s\\
			Pr-C6 & Pr-CL6 & - & 1.00s & - & 0.12s & - & 0.10s\\
			Pr-CL2 & Pr-C2 & 0.33s & 0.30s & 1.03s & 0.13s & 0.52s & 0.76s\\
			Pr-CL3 & Pr-C3 & 14.04s & 0.38s & 20.44s & 0.13s & 10.52s & 0.76s\\
			Pr-CL4 & Pr-C4 & - & 0.41s & 0.43s & 0.12s & - & 0.82s\\
			Pr-CL6 & Pr-C6 & - & 0.62s & 0.43s & 0.12s & - & 1.27s\\
			IP-2 & IP-3 & 0.11s & 1.53s & 0.63s & 0.14s & 2.43s & 0.15s\\
			IP-3 & IP-4 & 1.83s & 1.45s & 4.66s & 0.14s & 28.60s & 0.17s\\
			IP-4 & IP-6 & 30.33s & 1.75s & 80.03s & 0.14s & - & 0.17s\\
			IP-6 & IP-9 & - & 1.60s & 0.43s & 0.13s & - & 0.17s\\
			
			\hline
		\end{tabular}
	}
	\caption{Benchmark suite 2. $SRA$ is used for the algorithm for symbolic register automata and $SEQ$ for our tool. The symbol $\emptyset$ indicates the column where emptiness was checked, $\equiv$ indicates self equivalence and $\subseteq$ inclusion of languages.}\label{tab:sra}
\end{table}

All properties can be encoded by parametric automata with very few states and parameters. As a result the properties for each program can be verified in $<$ 2.6s, in detail the property for Dijkstra's algorithm was proven in 0.6s, QuickSort in 1.1s and Lamport's bakery algorithm in 2.5s.


The results for the second benchmark suite are shown in Table~\ref{tab:sra}.
The algorithm for symbolic register automata times out on 11 of the 36 benchmarks and our tool solves most benchmarks in $<$ 1s.
One thing to observe that the symbolic register automata scales poorly when more registers are needed to capture the back-references while the performance of our approach does not change noticeably when more parameters are introduced.


\section{Conclusion and Future Work}
\label{sec:concl}

In this paper, we have performed a systematic investigation of decidability and
complexity of constraints on sequences. Our starting point is the subcase of
string constraints (i.e. over a finite set of sequence elements), which include
equational constraints with concatenation, regular constraints, length
constraints, and transducers. We have identified parametric automata 
(extending symbolic automata and variable automata) as suitable notion of
``regular constraints'' over sequences, and parametric transducers (extending 
symbolic transducers) as suitable notion of transducers over sequences. We
showed that decidability results in the case of strings carry over to sequences,
although the complexity is in general higher than in the case of strings (sometimes
exponentially higher). 
For certain element theory (e.g. Linear Real Arithmetic),
it is possible to retain the same complexity as in the string case. 
We also delineate the boundary of the suitable notion of
``regular constraints'' by showing that the equational constraints with
symbolic register automata \cite{SRA} yields undecidable satisfiability.
Finally, our new sequence solver {\seqSolver} shows promising experimental 
results.

There are several future research avenues. Firstly, the complexity of sequence
constraints over other specific element theories (e.g. Linear Integer
Arithmetic) should be precisely determined. Secondly, is it possible to recover
decidability with other fragments of register automata (e.g., single-use automata
\cite{BS20})? On the implementation side, there are some algorithmic 
improvements, e.g., better nonemptiness checks for parametric automata in the
case of a single automaton, as well as product of multiple automata.

\paragraph*{Acknowledgment.} We thank anonymous reviewers for their thorough and
helpful feedback. We are grateful to Nikolaj Bj{\o}rner, Rupak Majumdar and 
Margus Veanes for the inspiring discussion.

\bibliographystyle{splncs04}
\bibliography{refs}

\appendix
\clearpage

\begin{center}
    \LARGE \textbf{APPENDIX}
\end{center}

\section{Proofs of Section~\ref{sec:model}}

\begin{proof}[of Lemma~\ref{lm:reduce}]
We show how to first remove an element constraint $\varphi$ that use the
constants $a_1, \ldots, a_k$. We simply create a new regular constraint $\Xseq \in
L(\Aut)$ that uses a fresh variable $\Xseq$ with the parametric automaton $\Aut$ 
with parameters $\params = \{a_1,\ldots,a_k\}$. The automaton
$\Aut$ recognizes precisely the set of sequences of the form $a_1,\ldots,a_k$
such that $\varphi$ is true. That is, $\Aut$ will have $k+1$ states
$q_0, \ldots, q_{k+1}$, where $q_0$ (resp.\ $q_{k+1}$) is the initial (resp.\ final)
state. The transition $q_i \to_{\psi} q_{i+1}$ uses the transition $\curr =
p_{i+1} \wedge \varphi$. It is easy to see that this removes $\varphi$, while
preserving satisfiability.

To make each parameters in a regular constraint $\Xseq \in L(\Aut)$ ``local'', we 
can simply make them ``visible'' in the string $\Xseq$. For simplicity, let us say
that $L(\Aut)$ has exactly one parameter $p$. We can devise a new automaton
using a fresh parameter $p'$ and accepts precisely the set of all sequences of
the form $p'.w$, where $w \in L_\mu(\Aut)$ for some $\mu$ instantiating $p$. The
resulting constraint becomes $\Yseq.\Xseq \in L(\Aut') \wedge \Yseq \in
L(\Aut'')$, where $\Aut''$ is the automaton that enforces that $\Yseq$ has
length 1. Constraints relating parameters across different parametric automata
can then be encoding using element constraints, which as we saw above could be
removed completely by means of a single parametric automaton using only 
``local'' parameters.
\end{proof}

\section{Proofs of Section \ref{sec:we}}
\begin{proof}[of Lemma~\ref{lem:finite_domain_subset_satisfiability}]
	If the constraints are satisfiable over $D_\pi$ then they are clearly satisfiable over $D$ (a larger set),
	as the same assignment works.
	
	If the constraints are satisfiable for $D$,
	then we change the assignment.
	For shortness of notation, for a type $t$ by $a_t$ we denote the chosen letter of this type in $D_\pi$,
	i.e.\ $a_t \in D_\pi, \type(a_t) = t$.
	Given an assignment satisfying all constraints we replace each symbol
	$a$ with $a_{\type(a)}$.
	We claim that such assignment still satisfies all constraints.
	
	For the regular constraint, suppose that $\mu(\Xseq) \in L_\mu(\Aut)$,
	let $\mu'(\Xseq)$ be the assignment value after the replacement.
	Note that $L_\mu(\Aut) = L_{\mu'}(\Aut) = L_\pi(\Aut)$,
	as both $\mu, \mu', \pi$ coincide on the values assigned to parameters.
	Then $\mu'(\Xseq) \in L_{\mu'}(\Aut)$: the corresponding letters
	of $\mu(\Xseq)$ and $\mu'(\Xseq)$ are of the same type, so they satisfy the same guards
	in $\Aut$ and so an accepting path for $\mu(\Xseq)$ yields the same accepting
	path for $\mu'(\Xseq)$ and vice versa.
	
	For the equational constraints: first observe that $|\mu(\Xseq)| = |\mu'(\Xseq)|$,
	as the latter was obtained from the former by a letter-to-letter replacement.
	Consider an equation $L = R$.
	If the corresponding letters in $\mu(L), \mu (R)$
	were both obtained from the variables,
	then they were replaced at both sides with the same letters.
	If symbols at both sides come from the constants,
	then they are clearly not changed (and still equal).
	If one side comes from a constant in the equation, say $c$,
	and the other from the variable, say $\Xseq$,
	then in $\mu(\Xseq)$ at the corresponding position $\mu(\Xseq)$ has $c$.
	As $c$ is a constant in the equation, the ``$x = c$'' is an atom in $\Phi$
	and so it is in the type $\type(c)$ and so $c$ is the unique letter
	(in whole $D$) with this type
	and so $c$ in $\mu(\Xseq)$ is replaced with $c$ in $\mu'(\Xseq)$ and so the equation is still satisfied.
	\qed
\end{proof}


\begin{proof}[of Theorem~\ref{thm:sequence_constraints_expspace}]
	Observe that the formula~\eqref{eq:parameters_type_verification} is polynomial in the size of $\tau$,
	so at most exponential.
	As the formula is exponential-size, it can be verified in \expspace,
	assuming that the existential fragment of $T$ can be verified in \pspace.
	
	The algorithm for word equations with regular constraints
	runs in \pspace, assuming that we have a \pspace oracle for the input alphabet
	(which can be exponential-size);
	such oracle can be implemented: given a type $t$ represented as a bitvector
	we can verify, whether this bitvector is in $\tau$, in \pspace, by simple search.
	In particular, if the formula~\eqref{eq:parameters_type_verification} can be verified in \pspace
	and $|\tau|$ is polynomial size,
	then the sequence constraints can be decided in \pspace as well.
	\qed
\end{proof}

\begin{proof}[of Corollary~\ref{cor:LRA}]
	It is folklore knowledge that in this case there are polynomially many different types:
	once the parameters assignments $\pi$ is fixed, 
	each atom in $\Phi$ is equivalent to a comparison of $\curr$ (the read symbol)
	with a number (depending on $\pi$).
	There are linearly many such numbers $n_1, n_2, \ldots n_k$ and so there are $2k+1$ realizable types:
	either $\curr$ is equal to one of those numbers or it is strictly between some consecutive two.
	Hence the claim follows from Theorem~\ref{thm:sequence_constraints_expspace}. \qed
\end{proof}

\begin{proof}[of Theorem~\ref{thm:we-expspace}]
	The theory that we are employing is the theory of automatic structures,
	i.e.\ $D = \Sigma^*$ for some finite alphabet $\Sigma$ (to be specified later)
	and $\theory$ uses automatic relations, i.e.\ relations $R \subseteq D^k$
	which can, roughly speaking, by recognized by an NFA reading the $k$-tuples of letters.
	In our case we will use only $1$- and $2$-automatic relations.
	To avoid confusion, we will refer to the elements of $D$ as strings,
	so that hey are not confused with sequences, so elements of $D^*$.
	Similarly, we talk about DFAs $A, \ldots $ and parametric automata $\Aut, \ldots$.

	The \expspace-hard problem that we are considering is a variant of the tiling problem:
	given a word $w \in \Sigma_0^*$ of length $n$ and two sets of tiles $V, H \subseteq \Sigma_0^2$
	decide, whether there is $m \in \mathbb N$ and a tiling $f : \{0, \ldots 2^n-1\} \times \{0, \ldots, m\} \to \Sigma_0$,
	of which we think as a board $2^n \times m$,
	such that
	\begin{itemize}
		\item $f(1,1) f(2,1) \cdots f(2^n,1) = wB^{2^n-n}$, where $B \in \Sigma_0$ is designated symbol;
		\item for each $0 \leq i < 2^n$ and $0 \leq j \leq m$ we have $\left(f(i,j),f(i+1,j)\right) \in H$, $\left (f(i,j), f(i,j+1)\right ) \in V$,
		i.e.\ each two consecutive horizontal values are a tile from $H$ and each two horizontally consecutive are in $V$;
		\item $f(1,m) f(2,m) \cdots f(2^n,m) = \downarrow B^{2^n-1}$, where $\downarrow, B \in \Sigma_0$ are designated symbols.
	\end{itemize}
	This problem is easily seen to be a reformulation of existence of exponentially space bounded accepting computation of a Turing Machine on input $w$.

	In our case $\Sigma = \Sigma_0 \cup \{ 0, 1, \#, \$\}$, where $0, 1, \#, \$$ are as special symbols 
	which are assumed to be not in $\Sigma_0$.
	We will often employ counting from $0$ to $2^n-1$
	and by $(i)_n$ for $0 \leq i < 2^n$ we denote a language $w \Sigma_0 \#$ where $w \in \{0,1\}^n$ is a binary notation of $i$ that uses $n$-bits,
	with the least significant digit first (let's say left);
	if $w a \# \in (i)_n$ for $a \in \Sigma_0$ then we say that $w a \#$ encodes $a$ in $(i)_n$.
	The idea is that we construct parametric automata such that their intersection is nonempty if only and only if
	there is a tiling $f$ and
	the parameter $p$ satisfies
	\begin{align}
		\label{eq:parameter_symbol_goal}
		F_{i,j} &\text{ encodes } f(i,j) \text{ in } (i)_n &\text{for } 0 \leq i < 2^n \text{ and } 0 \leq j \leq m\\
		\label{eq:parameter_row_goal}
		w_j
		&\in 
		F_{0,j} F_{1,j} \cdots F_{2^n-1,j} &\text{for } 0 \leq j \leq m\\
		\label{eq:parameter_column_goal}
		p
		&=
		w_0 w_1 \ldots w_m \$
	\end{align}
	In other words, $w_j$ encodes the $j$-th row of the tiling
	and $p$ is the concatenation of such encodings over each row.

	We first show a (variant of) folklore fact, that using intersection of $n+1$ DFAs of size $\mathcal O(n)$ each
	we can ``count'' from $0$ to $2^n-1$.
	
	\begin{lemma}
		\label{lem:counting_using_automata_lemma}
		There are DFAs $A_0, A_1, A_2, \ldots, A_n$ of size $\mathcal O(n)$ each such that
		$$
		\bigcap_{i=1}^n L(A_i) = ((0)_n (1)_n \cdots (2^n-1)_n )^* \$ \enspace .
		$$
		Similar claim holds for 
		$$
		\bigcup_{i=0}^{2^n-1} ((i)_n)^* \$ 
		$$
	\end{lemma}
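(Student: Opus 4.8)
The plan is the same for both claims; I would describe it for the first language $((0)_n (1)_n \cdots (2^n-1)_n)^{*}\$$ and then point out the (easier) modifications for the second. The idea is to intersect one DFA that polices the coarse block structure with, for each of the $n$ bit positions, one DFA that enforces the ``plus one'' relation between consecutive blocks. The reason $O(n)$ states per DFA suffice is that, once a single bit position is fixed, this relation becomes \emph{local}, and reading each $n$-bit word least-significant-bit first makes the carry relevant to that position available exactly when it is needed.

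First I would fix notation: on a string of the correct shape the word is a sequence of blocks $b_1 b_2 \cdots b_N$ followed by $\$$, each $b_t$ of the form $w_t a_t \#$ with $w_t = d^{(t)}_0 d^{(t)}_1 \cdots d^{(t)}_{n-1} \in \{0,1\}^n$ (LSB first) and $a_t \in \Sigma_0$; write $v_t$ for the integer encoded by $w_t$. I would let $A_0$ be a DFA with $O(n)$ states accepting exactly the strings of shape $(\{0,1\}^n \Sigma_0 \#)^{*}\$$ that, when $N \ge 1$, moreover have first block $0^n$ and last block $1^n$: a position-in-block counter (over $n+2$ values) together with two flags, ``the current block is all-zero so far'' and ``the current block is all-one so far'', is enough.

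Next, for each $k \in \{0, \dots, n-1\}$ I would build a DFA $A_{k+1}$ with $O(n)$ states checking, on well-formed inputs, that $d^{(t+1)}_k = (v_t + 1 \bmod 2^n)_k$ for every $t < N$. The observation that makes this work is that the carry into position $k$ of the increment is exactly the predicate $d^{(t)}_0 = \cdots = d^{(t)}_{k-1} = 1$, which $A_{k+1}$ can accumulate while scanning the first $k$ bits of $b_t$; combining it with $d^{(t)}_k$ yields the single bit $d^{(t+1)}_k = d^{(t)}_k \oplus (\text{carry})$ that $A_{k+1}$ stores, carries across the remainder of $b_t$, and compares against bit $k$ of $b_{t+1}$ (refreshing the stored bit at that moment). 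So $A_{k+1}$ needs only a position-in-block counter, a flag ``the bits of this block seen so far are all $1$'', the stored expected bit, and a flag recording whether a comparison is still owed; that is $O(n)$ states, and there are $n$ such automata. Strings not of the right shape are already rejected by $A_0$, so $A_{k+1}$ may behave arbitrarily on them.

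Finally I would verify correctness. Together the $A_i$ force $v_1 = 0$ and $v_{t+1} = v_t + 1 \bmod 2^n$ for all $t < N$, so $v_t = (t-1) \bmod 2^n$ by induction; the extra condition from $A_0$ that the last block is $1^n$, i.e.\ $v_N = 2^n-1$, then forces $N$ to be a positive multiple of $2^n$ (or $N = 0$), which says precisely that the word is a concatenation of complete sweeps $(0)_n (1)_n \cdots (2^n-1)_n$ followed by $\$$; the reverse inclusion is immediate. For the second statement the construction is the same but simpler: $A_0$ keeps only the shape check, and each $A_{k+1}$ instead stores bit $k$ of the first block and verifies that every later block repeats it, so the intersection is the set of well-shaped strings all of whose blocks carry one and the same $n$-bit word, i.e.\ $\bigcup_{i=0}^{2^n-1} ((i)_n)^{*}\$$. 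The only genuinely delicate step, which I would spell out, is the $O(n)$ bound on $|A_{k+1}|$: one must see that no block-wide memory is needed, i.e.\ that a single ``deferred obligation'' bit carried over the tail of a block suffices, which is exactly what the least-significant-bit-first encoding buys; everything else is routine automaton bookkeeping and the one-line induction above.
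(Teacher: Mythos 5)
Your proposal is correct and follows essentially the same strategy as the paper: one DFA for the block shape $(\{0,1\}^n\Sigma_0\#)^*\$$ plus one DFA per bit position that tracks the carry of the increment between consecutive LSB-first blocks (and, for the second language, simply checks that the bit is repeated). Your version is in fact slightly more explicit than the paper's sketch on the boundary conditions (first block $0^n$, last block $1^n$) that force the word to consist of complete sweeps, and on the uniform rule $d^{(t+1)}_k = d^{(t)}_k \oplus \text{carry}$ covering both the carry and no-carry cases.
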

	\begin{proof}
		The DFA $A_0$ ensures that the string is of the form $(\{0,1\}^n\Sigma_0 \#)^*\$$, which can be done using $n+2$ states.
		In the following description we ignore the $\$$, which is used only as a terminating marker.

		The $k$-th automaton $A_k$ ensures that when reading $(i)_n$ it has $k$-least significant digits $1^{k-1}0$
		(or $1^{k-1}1$)
		then the next read $(j)_n$has $0^{k-1}1$ (or $0^{k-1}0$), i.e.\ it takes care of the carry onto the $k$-th position.
		To this end the automaton counts the positions modulo $n+2$ and it reads the $k$ least significant digits.
		If they are all $1$ then it stores the $k+1$st digit (in the states)
		and checks that the next read number has $k$ digits $0$ and then the updated next digit.
		The automaton accepts when all such checks were satisfied and it read full encodings.
		It is easy to see that the claim holds.

		In the second case the construction is similar, but this time the $A_k$ stores the $k$-th symbol (assuming that it is $0$ or $1$)
		of the string $(i)_n$ in the state and ensures that the $k$-th digits of the next $(j)_n$ is the same.
		\qed
	\end{proof}
	
	We construct linear number of parametric automata,
	one of them, call it $\Aut$, will use a single parameter $p$
	and the other will not use any parameters.
	We ensure that
	\begin{equation}
		\label{eq:parameter_reduction_value}
		p \in \left ((0)_n (1)_n \cdots (2^n-1)_n \right )^*\$ 
	\end{equation}
	
	By Lemma~\ref{lem:counting_using_automata_lemma} there are $k = \mathcal O(n)$ DFAs $A_1, \ldots, A_m$ that ensure this
	and we write a subformula $R_{A_1}(p) \land \cdots \land R_{A_1}(p)$, where $R_{A_i}(p)$ holds if and only if $A_i$ accepts $p$,
	into the guards labelling the transitions from the starting state in $\Aut$.
	

	On the other hand, we want to enforce that the sequence $\Xseq$ in the intersection of parametric automata is of the form
	\begin{equation}
		\label{eq:encoding_of_consecutive_numbers_*}
		\Xseq \in (0)_n^*, (1)_n^*, \ldots, (2^n-1)_n^* , \$
	\end{equation}
	Enforcing $\Xseq \in (j_0)_n^*, (j_1)_n^*, \ldots, (j_{2^n-1})_n^* , \$$
	for some $j_0, \ldots, j_{2^n-1}$s can be done by the Lemma~\ref{lem:counting_using_automata_lemma}:
	the $\Aut$ has a guard $\phi(\curr)$ which checks that the read string $\curr$ is in the language $\bigcup_{i=0}^{2^n-1} (i)_n^*$.
	On the other hand enforcing that $j_0 = 0$ and $j_{i+1} = j_i+1$, which implies $j_i = i$,
	can be done by a construction similar to the one in Lemma~\ref{lem:counting_using_automata_lemma}, but on the level of parametric automata:
	we use $n$ parametric automata (without parameters) $\Aut_1, \Aut_2, \ldots, \Aut_n$.
	The $k$-th automaton $\Aut_k$ reads consecutive sequences $s_1, s_2, \ldots$.
	If the $k$-th least significant digits in $s_i$ are $1^{k-1}0$ (or $1^{k-1}1$)
	then in the next string $s_{i+1}$ $\Aut_k$ expects $k$ least significant digits $0^{k-1}1$ (or $0^{k-1}0$).
	The condition on least $k$-significant digits can be checked by a simple regular $1$-ary relation of $\mathcal O(k)$ size.
	$\Aut_k$ accepts when all the checks passed;
	additionally, the $\Aut_n$ accepts only directly after reading $1^n$ in some $s_i$.
	The argument for correctness is immediate.

	Now we want to ensue that $p$ defines appropriate tiling.
	Enforcing that $w$ is encoded in the first row is simple:
	when $w = w_1\cdots w_n$ then we construct a DFA verifying that $p$
	is in $\{0,1\}^* w_1 \# \{0,1\}^* w_2 \# \cdots \{0,1\}^* w_n \# \Sigma^*$
	and add the corresponding relation to guard from the initial states of parametric automaton.
	We should also verify the ending condition:
	a similar DFA checks that after seeing $0^n \downarrow$ verifies that it sees $\{0,1\}^n B \#$
	and accepts directly after seeing $1^n B \#$.
	Again, the corresponding relation is added to the initial transitions.

	The tiles from $H$ are easy to define: a DFA counts modulo $n+2$ and after seeing $\{0,1\}^n$ it stores the next symbol $a$ in its finite memory,
	reads $\#$, reads $n$ bits and reads $b$, checking whether $(a,b) \in H$ and replaces $a$ with $b$ in the memory.
	Also, there is no comparison if the read bitstring is $0^n$ (as this means that we begin reading the next row).
	The appropriate 1-automatic relation is added to the transition of the parametric automaton.

	The last and crucial are the $V$ tiles.
	This is enforced using a parametric automaton $\Aut$ with a single parameter $p$ running on a sequence $\Xseq$ as defined in~\eqref{eq:encoding_of_consecutive_numbers_*}.
	The $\Aut$ has two states: the final state (with no outgoing transitions) and the starting state.
	The starting state has two transitions, one for $\curr = \$$ that goes to the final state
	and the other transition that goes to itself, labelled with $\phi$.
	The idea is that when $\curr = (i)_n$ then $\phi$ verifies the $V$ tiles for $i$, i.e.\ in the $i$-th column of the encoded tiling.
	The formula $\phi$ uses a single automatic relation on $\curr, p$, defined by automaton $A$;
	recall that $\curr \in (i)_n^*$ for some $i$.
	The DFA $A$ reads strings $p$ and $\curr$, counting modulo $n+2$,
	when it reads $(j)_n$ in $p$ and each of the corresponding bit in $\curr$ is the same, i.e.\ $i = j$,
	then it stores the next symbol from $p$ in the finite memory, say it is $a$,
	and when it reads $(i)_n$ in $p$ for the next time, with $b$ being the next symbol,
	it checks, whether $(a,b) \in H$ and replace $a$ with $b$ in the finite memory.
	$\Aut$ accepts, when there was no error and it gets to $\$$,
	i.e.\ the above checks were done for each $0 \leq i < 2^n$.

	It remains to show the correctness of the construction.
	So suppose that a board $f: \{0,\ldots, 2^{n}-1\} \times \{0, \ldots m\} \to \Sigma_0$ exists,
	set $p$ as in~\eqref{eq:parameter_symbol_goal}--\eqref{eq:parameter_column_goal}
	and set
	\[
	\Xseq = (0)_n^{2^n m} , (1)_n^{2^n m} , \ldots, (2^n-1)_n^{2^n m} \$
	\]

	It is easy to see that they are accepted by all the defined parametric automata.
	
	So suppose that the intersection of the defined automata is non-empty.
	As already observed, this means that $p$ is of the form described in~\eqref{eq:parameter_reduction_value},
	hence it is of the form as in~\eqref{eq:parameter_symbol_goal}--\eqref{eq:parameter_column_goal} for some $m$.
	We define the valuation of the function as written in \eqref{eq:parameter_symbol_goal}--\eqref{eq:parameter_column_goal},
	i.e.\ $f(i,j)$ is the value after string from $\{0,1\}^n$ in $F_{i,j}$.
	
	One of the automatic relations explicitly takes care that $f(0,0)f(1,0)\cdots f(2^n-1,0) = w B^{2^n-n}$.
	Another that $f(0,m)f(1,m)\cdots f(2^n-1,m) = \downarrow B^{2^n-1}$.
	Also the horizontal tiles are taken care of by another automatic relation.
	What is left to show are the vertical tiles.
	Observe that it was already shown that
	$$
	\Xseq = (0)_n^{m_0} , (1)_n^{m_1} , \ldots, (2^n-1)_n^{m_{2^n-1}} \$
	$$
	for some $m_0, m_1, \ldots, m_{2^n-1}$.
	Since the parametric automaton accepts $\Xseq$, the automatic relation holds for $((i)_n^{m_i}, p)$
	for each $0 \leq i < 2^n$.
	This means that the DFA on pairs reads to the end of $p$, so in particular $m_i \geq 2^n m$.
	Moreover, by the way this relation is defined we can see that it verifies that $(f(i,j),f(i,j+1)) \in V$,
	so also all vertical constraints are verified. \qed
\end{proof}

As observed in Lemma~\ref{lm:reduce}, the element constraints can be reduced to regular and equational constraints.
However, when no regular constraints are present, then verifying element and equational constraints essentially boils down to
a verification of the element constraints (so in the theory $\theory$) and solving the equational constraints,
which can be interpreted as word equations.
In the end, the whole procedure is much simpler and has lower complexity.

\begin{proof}[of Theorem~\ref{thm:no_reg_constraints}]
We first collect all element constraints, let $\phi(\mathcal C)$ be their conjunction.
We iterate over possible partitions of $\mathcal C$ into $\mathcal C_1, \ldots, \mathcal C_k$
into sets that are equal, let $\psi(\mathcal C_1, \ldots, \mathcal C_k)$ be a formula which specifies those equalities
(e.g., for $\mathcal C_1 = \{c_1, \ldots, c_\ell\}$ it has $c_1 = c_2 \land c_2 = c_3 \land \cdots \land c_{\ell-1} = c_\ell$ subformula).
We verify the satisfiability of the element constraints,
i.e.\[
\exists \mathcal C \phi(\mathcal C) \land \psi(\mathcal C_1, \ldots, \mathcal C_k)
\]
It is is satisfiable, we solve the word equations over the alphabet containing letters from the equational constraints and one representative
for each class $\mathcal C_i$;
note that we explicitly identify all letters in one equivalence class.

It is easy to see that the procedure runs in \pspace and it will terminate, if there is a solution of the original system of constraints.

The constructed formula can be seen as an easier equivalent of~\eqref{eq:parameters_type_verification}.
\qed
\end{proof}
\section{Proofs in Section \ref{sec:sl}}

\begin{proof}[of Proposition~\ref{prop:pre-image}]
	The pre-image computation for parametric transducers can be done as follows.
    Let $\Tra = (\params, \controls, \Delta, q_0, \finals)$ be a parametric transducer
    and $\Aut = (\params, \controls', \Delta', q_0', \finals')$ a parametric automaton.
    Then we can define the pre-image of $\Aut$ as $\Aut' =
    (\params, \controls \times \controls', \transrel'',q_0, \finals \times
    \finals', )$. For every $(q_1, (\phi, \Wseq), q_2) \in \Delta$ we compute 
    all sequences of transitions with length $n := |\Wseq|$ in $\mathcal{A}$ and 
    for each sequence we add one new transition to $\transrel''$ which 
    simulates the entire sequence. More precisely, for a sequence of 
    $\transrel'$-transitions 
    \[
        (q_1, \phi_1), \dots (q_n, \phi_n)
    \]
    and one $\transrel-$transition $(p_1, (\psi, \textbf{w}), p_2) \in R$, we 
    add the transition 
    \[
        ((p_1, q_1), \psi  \wedge \phi_1[w_1/\curr] \wedge \dots \wedge
        \phi_n[w_n/\curr], (p_2, q_n)),
    \]
    where $\textbf{w} = (w_1,\dots, w_n)$. Note that each $w_i$ is of the form
    $t(\curr)$ for some term $t$. Here, $\phi_1[w_i/\curr]$ means
    $\phi_i$ but after applying substitution of every occurrence $\curr$ in 
    $\phi_i$ by $w_i$.  An automaton can have exponentially many paths of
    length $n$, hence the resulting automaton has exponentially many 
    transitions.
    \qed
\end{proof}

\begin{proof}[of Theorem \ref{thm:sl-expspace}]
    We now give the complexity analysis of the procedure. From the Proof of
    Proposition \ref{prop:pre-image}, each transition in any parametric
    automaton in any intermediate regular constraint $\Aut( \Xseq)$ will have
    as a guard a conjunction of formulas of the form
    $\psi(t(\curr))$, where $\psi$ is a guard occurring in a parametric
    automaton/transducer in the \emph{original} constraint and $t$ is of the
    form $t_1t_2\cdots t_m$ for some $m$ smaller than the number of assignments
    in $S$, and each $t_i(\curr)$ is a term in the parametric transducer in
    the \emph{original} constraint. Note that $t_1\cdots t_m$ simply means a
    composition of substitutions, e.g., $t_1 = \curr + 7$ and $t_2 = \curr + 10$,
    then $t_1t_2 = \curr + 17$. Therefore, by simple counting (and thinking of a
    conjunction as a set, after removing redundant conjuncts), there are at 
    most double exponentially many possible transitions in $\Aut$. 

    Furthermore, computing the pre-image of a transducer increases the number of
    states by at most a polynomial, while computing the pre-image of a
    concatenation does not increase the number of states. So, $\Aut$ will have 
    at most exponentially many states. We finally count the number of possible
    $\Aut$ that are generated in the intermediate steps. Computing the pre-image
    of a transducer does not generate new regular constraints, while
    concatenation yields a \emph{disjunction} of at most quadratically many
    regular constraints. Notice that we have to ``nondeterministically'' 
    choose one of these disjuncts each time. Nondeterminism is not needed
    since the number of such choices is at most exponential. Therefore, we can
    simulate this by a 2-EXPTIME deterministic machine.

    After removing all the assignments $\Yseq :=
    f(\Xseq_1,\ldots,\Xseq_r,\params)$, we end up with a conjunction of
    polynomially many regular constraints $\Aut(\Xseq)$, each having at most
    exponentially many states and doubly exponentially many transitions. By
    simply taking products, we obtain a parametric transducer $\mathcal{B}$ with at most
    exponentially many states and double exponentially many transitions. We can
    guess a simple path in $\mathcal{B}$, which has at most exponentially
    many states, resulting in a formula of exponential size. Since
    $T$ is in PSPACE, we obtain the double exponential time upper bound.
    \qed
\end{proof}

\section{Proofs of Section \ref{sec:ext}}

\begin{proof}[of Theorem~\ref{thm:length_constraints_simple}]
	Clearly, a word equation with regular and length constraint can be solved using a $T$-sequence solver
	(for regular, equational and length constraints).
	
	In the other direction:
	As outlined above, we can use Lemma~\ref{lem:parameters_type_verification} also when the length constraints are present.
	We iterate over possible sets of types $\tau \subseteq 2^\Phi$,
	for each one we verify the formula~\eqref{eq:parameters_type_verification}
	and if it is satisfiable, we check the validity of system of word equations with regular and length constraints.
	If it is satisfiable, then the original system was also satisfiable:
	the argument for equational and regular constraints is as in Lemma~\ref{lem:parameters_type_verification}
	and the same Lemma also claims that the lengths of solutions for the original and reduced systems are the same
	(the solution is in fact the same, just interpreted differently).
	If the original system was satisfiable for an assignment of parameters $\pi$,
	then it is satisfiable for $\tau = \{\type_\pi(a) \: : \: a \in D\}$ and so when we consider $\tau$,
	the corresponding system of word equations (with regular and length constraints) is satisfiable
	for the same $\nu$, see Lemma~\ref{lem:parameters_type_verification},
	and also formula~\eqref{eq:parameters_type_verification} is satisfied for assignment $\pi$.
	Lastly, the lengths are preserved, so also the length constraints are satisfied.
	\qed
\end{proof}

\begin{proof}[of Theorem~\ref{thm:ext-length}]

	We define $\type_\pi(a) = \{ \phi \in \Phi \: : \:  \phi(\pi(\params), a, \pi(\mathcal L)) \text{ holds}\}$.
	We iterate over possible sets of types $\tau \subseteq 2^\Phi$.
	For a given $\tau$ we introduce fresh integer variables $\{\ell_t, a_t\}_{t \in \tau}$
	those represent letters, but in the following we would not like to mix the variables used in
	an analogue of \eqref{eq:parameters_type_verification},
	which are treated as length-variables of artificial variables, so $\{\ell_t\}_{t \in \tau}$,
	and the letters used in word equations, so $\{a_t\}_{t \in \tau}$, which are treated as symbols with no specific meaning.
	We treat $\{\ell_t\}_{t \in \tau}, \params$ as length variables (of some fresh word variables).
	
	Consider the set of atoms
	\begin{equation}
		\label{eq:set_of_formulas_reduction}
		\Phi' = \bigcup_{t \in \tau} \bigcup_{\phi \text{ atom in } t} \{\phi(\params, \ell_t, \mathcal L)\} \enspace .
	\end{equation}
	Observe that if each atom in $\Phi'$ is satisfied
	then also each guard in the type from $\tau$ is satisfied.
	Note that $\Phi'$ can be seen as a system of ``length constraints''
	that use original length variables $\mathcal L$ and fresh ``length variables'' $\params \cup \{\ell_t \}_{t \in \tau}$.
	Note that there are no corresponding word variables, we can introduce them by dummy equations $X = X$.

	We consider a system of word equations (the input system of equational constraint) over alphabet $D_\pi$
	plus regular constraints, which are obtained exactly as in Lemma~\ref{lem:finite_domain_subset_satisfiability},
	plus original length constraints
	and fresh ``length constraints'' $\Phi'$.
	If this system is satisfiable, then also the original system is.

	Note that we have separated the letter $a_t$ used in word equations with its ``length'' $\ell_t$.
	However, the constructions guarantees that $a_t$ can use exactly the same transitions as $\ell_t$ in the parametric automaton.

	It remains to show that the reduction is valid.

	Suppose that some of the obtained system of word equations (with parameter constraints and length constraints) is satisfiable.
	We claim that the same values of sequence variables, in which $a_t$ is replaced with the value of $\ell_t$,
	satisfies all original constraints.
	Clearly it satisfies the equality constraints and length constraints, as those are the same.
	For regular constraints observe, that for $a_t$ we allowed the transitions by $\phi$
	when atoms of $\phi$ are in set~\eqref{eq:set_of_formulas_reduction}
	(recall that we assume that guard is a conjunction of atoms)
	and the set of new length constraints includes atoms that make such guard satisfied.
	Hence the corresponding transition can be made for $\ell_t$ in the parametric automaton (for assignment $\mu$)
	and hence the corresponding path in $\Aut$ exists.

	Suppose that the original constraints are satisfied,
	the proof is similar as in the case of Theorem~\ref{thm:length_constraints_simple}.
	Then the original constraints are satisfied by some $\mu$.
	Let $\tau = \{\type_\mu(a) \: : \: a \in D\}$ be the set of realisable types.
	We can change $\mu$ on $\mathcal V, \params$ so that it uses only one letter $a_t$ of a given type $t \in \tau$. i.e.\ from $D_\mu$:
	the argument is as in Lemma~\ref{lem:finite_domain_subset_satisfiability}.
	Consider, what the reduction does for $\tau$.
	Then all atoms in $\Phi'$ are satisfied by $\mu(\params), \{a_t\}_{t \in \tau}, \mu(\mathcal L)$.
	The system of word equations is clearly satisfied (by $\mu$),
	so are the length constraints, as those are the same equations and same length constraints.
	The construction of the regular constraints guarantees that the transitions for $a_t$ in $\Aut$
	and in the constructed system are the same,
	hence the regular constraints are satisfied.
	Lastly, the new ``length constraints'' are satisfied by the values $\mu(\params), \{a_t\}_{t \in \tau}, \lambda(\mathcal L)$,
	by the choice of those atoms.
	\qed
\end{proof}

\section{Additional material: Section~\ref{subsec:register_automata} Undecidability of register automata constraints}

First, observe that we can assume that we have a finite number of chosen constants, called $a, b, c$, among letters in $D$:
to this end we introduce sequence variables $a, b, c$ --- 
one for each constant, and regular constraint $L = \{xy \: : \: x, y \in D, x\neq y, |x| = |y| = 1\}$, which can be clearly realized by a register automaton,
and writing constraints $\Aut(ab), \Aut(ac), \Aut(bc)$; formally this is realized by new sequence variables.

We give a construction, which results in $X, X', X''$ satisfying the following conditions:
\begin{enumerate}
\item $X'' = aX$ and $X''$ begins with $a$ and has all letters different; \label{enum:X}
\item $X' = (aX)^{|X|}$. \label{enum:X'}
\end{enumerate}
In the following we will use simple variants of this construction to encode
(positive) integer arithmetic with addition and multiplication,
with $X$ as above representing an integer variable with value $|X|$.

For a variable $X$ introduce a variable $X''$ and write a constraint
$X'' = aX$ and an automaton constraint that $X''$ has no other occurrence of $a$.
Write constraint $X''X' = X'X''$. It is well known that this implies that there is $aw \in D^+$ such that $X'' = (aw)^k$, $X' = (aw)^\ell$ for some $k, \ell$.
Since $X''$ has only a single $a$, $k = 1$ and so $X' = (aX)^\ell$.

We construct another register automaton and put a constraint on $X'a$
(formally this requires a new variable);
there is a special case when $X' = a$, which is trivially handled separately;
hence in the following we assume that $|w| > 0$.
The register automaton reads the first letter of $X'a = (aX)^k a$, i.e. $a$,
and stores it in the register $r_1$.
Then it enters a loop: it reads a value, stores it in $r_2$ and scans the input until it finds another occurrence of value from $r_2$.
If in the meantime it finds no letter $a$ (stored in $r_1$), then it rejects.
After finding another copy of $r_2$ it goes to the next element, stores it in $r_2$ and continues the loop
(the $r_1$ is not altered).

The loop ends when after finding the copy of $r_2$ the next letter is $a$,
in which case we accept (and reject in all other cases).

\begin{lemma}
\label{lem:register_automata_all_letters_different}
The construction above yields $X, X', X''$ satisfying conditions \ref{enum:X}, \ref{enum:X'}.
\end{lemma}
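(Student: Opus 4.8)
The plan is to peel off the three groups of constraints one at a time. First, from the equation $X'' = aX$ together with the regular constraint that $a$ does not recur in $X''$, I get that $X''$ starts with $a$ and contains that letter exactly once; in particular $X''$ is primitive, since a nontrivial power $v^j$ (with $j\ge 2$) of any nonempty word repeats each letter of $v$ and would therefore repeat $a$. This already establishes the ``begins with $a$'' half of condition~\ref{enum:X}, and it is the key feature that makes the commutation argument usable.

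Next, I would handle the commutation equation $X''X' = X'X''$ by the classical fact that two words commute iff both are powers of a common word. As $X'' = aX$ is primitive and nonempty, that common word must be $X''$ itself, so $X' = (X'')^\ell = (aX)^\ell$ for some $\ell \ge 0$, as already indicated in the text. The boundary cases are dispatched separately: $\ell = 0$ forces $X' = \varepsilon$, and then the register automaton on $X'a = a$ has no letter with which to start its loop and rejects; and $X = \varepsilon$ is precisely the ``$X' = a$'' special case excluded by hypothesis. So from now on $|X|\ge 1$ and $\ell\ge 1$, and I may write $X'a = (aX)^{\ell}\,a$, a concatenation of $\ell$ copies of the block $aX$ followed by one more $a$.

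The core of the argument is then an analysis of the accepting runs of the register automaton on $(aX)^{\ell} a$. After loading $r_1 := a$ from the first letter, the automaton repeatedly scans, for the value currently in $r_2$, insisting on seeing a letter equal to $r_1 = a$ strictly between that occurrence and its next occurrence. Since every block of the input is literally $aX$, a repeated letter inside a block would yield two consecutive occurrences of the same value with no $a$ between them, forcing a reject; hence on any accepting run $aX$ has pairwise distinct letters, which together with the previous paragraph gives exactly condition~\ref{enum:X}. To read off condition~\ref{enum:X'} I would prove, by induction on the loop iteration, that the $j$-th iteration starts at the $j$-th letter of $X$ inside block $j$ and ends at the $(j{+}1)$-st letter of $X$ inside block $j{+}1$, so that the loop-termination test (the letter following the matched copy equals $a$) can fire only once the letters of $X$ are exhausted in lockstep with the blocks; this pins the exponent to $\ell = |X|$, while any other value of $\ell$ either runs a scan off the end of the string or leaves the head strictly inside a block, so no accepting run exists. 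For the converse direction, needed to see the gadget is realizable, I would check directly that any $X$ with $aX$ all-distinct, together with $X'' = aX$ and $X' = (aX)^{|X|}$, satisfies all four imposed constraints.

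The step I expect to be the main obstacle is this last positional bookkeeping: making the induction on iterations precise and, in particular, getting the behaviour at the trailing $a$ right, so that acceptance is shown to be \emph{equivalent} to the exact identity $\ell = |X|$ rather than to a one-sided inequality on the exponent. The reasoning is elementary but delicate, and it is where the off-by-one subtleties of the construction have to be handled carefully.
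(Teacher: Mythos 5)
Your proposal is correct and follows essentially the same route as the paper's proof: the single occurrence of $a$ in $X''=aX$ pins down the commutation equation to $X'=(aX)^\ell$, a pass-by-pass induction shows the $i$-th pass advances one letter of $X$ and one block per iteration so that acceptance forces $\ell=|X|$, and a repeated letter inside $X$ is rejected because no $a$ separates the two occurrences. The extra care you flag around the boundary cases ($\ell=0$, $X=\varepsilon$) and the converse realizability check is handled just as informally in the paper, so there is no substantive difference.
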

\begin{proof}
	By the construction we have that $X = w$ with $a$ not being a letter in $w$.
	It was argued that $X' = (aw)^\ell$ for some $\ell$.
	
	We call the automaton action between storing for the $i$-th time the letter
	in $r_2$ and finding its next occurrence an $i$-th pass.
	
	Suppose that $aw$ consists of different letters.
	By simple induction in $i$-th pass we store the $i$-th letter of $w$ in the register and read till $i$-th letter in the $i+1$-st copy of $w$ 
	(and store the next letter).
	Hence, in $w$-th pass we reach the last letter of $|w|$-th copy of $aw$,
	which is followed by $a$, so we accept after reading $(aw)^{|w|}a$,
	as claimed.
	We reject in each other case, so in particular for other powers $(aw)^na$, where $n \neq |w|$.

	If $X = w$ and $w$ contains a letter twice, say at positions $i < j$ then at the $i$-th pass we will reject, as there is no $a$ between those positions.
	\qed
\end{proof}

\begin{lemma}
\label{lem:register_automata_addition}
Using register automata constraints and sequence constraints we can enforce that substitution for $X, Y, Z$ has all letters different and $|X| + |Y| = |Z|$
\end{lemma}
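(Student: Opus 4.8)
The plan is to reduce the length equation $|X|+|Y|=|Z|$ to ordinary concatenation, letting the gadget of Lemma~\ref{lem:register_automata_all_letters_different} carry the ``all letters different'' requirement. Here I read ``substitution for $X,Y,Z$ has all letters different'' as: each of $\mu(X),\mu(Y),\mu(Z)$ is a word in which no letter repeats (not that the three are mutually letter-disjoint, which would be inconsistent with the relation we want, since $Z=XY$ forces the letters of $\mu(Z)$ to be those of $\mu(X)$ followed by those of $\mu(Y)$).

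Concretely, I would first apply the construction behind Lemma~\ref{lem:register_automata_all_letters_different} to the variable~$Z$: this introduces auxiliary variables $Z',Z''$ together with their register-automata and equational constraints, forcing $Z''=aZ$ with $Z''$ having pairwise distinct letters, hence $\mu(Z)$ itself has all letters different (and none equal to~$a$); the side conclusion $Z'=(aZ)^{|Z|}$ is irrelevant here, being satisfiable for every value of $Z$. Second, I would add the single equational constraint $Z=XY$.

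For correctness: in any solution $\mu(X)$ and $\mu(Y)$ are a prefix, resp.\ a suffix, of $\mu(Z)$, so both are words with pairwise distinct letters, and $|\mu(Z)|=|\mu(X)|+|\mu(Y)|$ holds by the semantics of concatenation; this is the ``only if'' direction. For ``if'', I use that $D$ is infinite: given target lengths $n_X,n_Y$, choose $n_X+n_Y$ distinct elements of $D\setminus\{a\}$, let $\mu(Z)$ be their concatenation, $\mu(X)$ the length-$n_X$ prefix, $\mu(Y)$ the length-$n_Y$ suffix, and instantiate $\mu(Z'),\mu(Z'')$ as the gadget prescribes; all constraints hold, with the degenerate case $|Z|=0$ covered by the degenerate case of Lemma~\ref{lem:register_automata_all_letters_different}. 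Thus the conjunction of these constraints is satisfiable precisely when $|X|+|Y|=|Z|$, which is what ``we can enforce'' means.

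I do not anticipate a real obstacle inside the lemma itself: its content is merely that concatenation adds lengths and that ``no repeated letter'' passes to factors. The point that does require care, but which belongs to how this lemma is plugged into the undecidability reduction rather than to the lemma as stated, is that addition-as-literal-concatenation does \emph{not} compose directly: a variable reused as an addend in two different sums would be forced to occur twice inside one distinct-lettered sequence. In the reduction one therefore invokes the lemma with fresh $X,Y,Z$ for each arithmetic equation and links the fresh operands back to the canonical integer variables by separate equal-length constraints; building such an equal-length gadget (again from register automata and equational constraints, exploiting the $(aX)^{|X|}$ blocks) is the genuinely delicate ingredient of the overall argument, but it is orthogonal to the statement proved here.
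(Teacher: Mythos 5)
Your proposal is correct and matches the paper's own (one-line) proof exactly: impose the Lemma~\ref{lem:register_automata_all_letters_different} gadget on $Z$ and add the equation $XY=Z$, so that distinctness of letters passes to the factors and lengths add by concatenation. Your additional remarks on satisfiability via the infinitude of $D$ and on how the lemma must be instantiated with fresh variables in the overall reduction are sound elaborations the paper leaves implicit.
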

Simply write $XY=Z$ and use the constraint from Lemma~\ref{lem:register_automata_all_letters_different} to $Z$, which in particular implies that all letters in $X, Y$ are different (and other than $a$).

We can use a variant of construction from Lemma~\ref{lem:register_automata_all_letters_different}
to simulate multiplication:
roughly, we use three variables $X, Y, Z$ and consider an equation
$(aXbYcZ)W = W(aXbYcZ)$ and make three types of checks in parallel.
For $Z$ we use the same construction as before, for $Y$ we ``reset'' after each $|Y|$ full passes
and for $X$ we make one pass for each full pass of $Y$.
This will ensure that $|Z| = |X| \cdot |Y|$.
\begin{lemma}
\label{lem:register_automata_multiplication}
Using register automata constraints and sequence constraints we can enforce that substitution for $X, Y, Z$ has all letters different and $|X| \cdot |Y| = |Z|$
\end{lemma}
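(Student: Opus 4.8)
The plan is to adapt the ``pass'' scanner of Lemma~\ref{lem:register_automata_all_letters_different} so that a single register automaton, reading a word that a word equation forces to be a large power, runs three interlocked counting routines whose simultaneous termination pins down the product relation. Let $a,b,c$ be the distinguished single-letter variables fixed in the preamble. Introduce a fresh sequence variable $U$ with the equation $U = aXbYcZ$, register-automaton constraints asserting that each of $a,b,c$ occurs in $U$ exactly once (for $b,c$ this is done on $bU$, resp.\ $cU$, by storing the first letter and then checking that $U$ contains it exactly once), and a fresh variable $W$ with the commutation equation $UW = WU$. Since $a$ occurs in $U$ exactly once, $U$ is primitive, so $UW = WU$ forces $W = U^{\ell}$ for a unique $\ell \ge 1$ (non-emptiness of $W$ being guaranteed by a trivial constraint, or by the automaton below). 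The construction then reduces to building a register automaton $\mathcal{R}$, imposed as a constraint on a fresh variable equal to $W$ followed by the marker $a$, which accepts $(aXbYcZ)^{\ell}a$ if and only if $\ell = |Z|$ \emph{and} $\ell = |X|\cdot|Y|$, and which in passing also enforces that all letters occurring in $X$, $Y$, $Z$ are pairwise distinct and different from $a,b,c$, exactly as in Lemma~\ref{lem:register_automata_all_letters_different}.

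The automaton $\mathcal{R}$ runs three copies of the Lemma~\ref{lem:register_automata_all_letters_different}-style scanner in parallel, one tracking the $X$-letters of successive blocks $U$, one the $Y$-letters, one the $Z$-letters; the block delimiters $a,b,c$ act as phase markers, so each scanner recognizes with finitely many control states and a constant number of registers exactly when it has completed a single pass (its stored letter is the last letter of $X$, resp.\ $Y$, resp.\ $Z$, i.e.\ its in-block successor is a delimiter). The $Z$-track is precisely the scanner of Lemma~\ref{lem:register_automata_all_letters_different}, and forces $\ell = |Z|$. The $Y$-track behaves the same, but upon completing $|Y|$ passes it resets to a pass on the first letter of $Y$ and emits one ``tick''; termination of this track forces $\ell$ to be a multiple of $|Y|$, say $\ell = q\cdot|Y|$. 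The $X$-track stays idle except that each received tick triggers exactly one pass on the $X$-letters; its termination forces the number of performed passes to be $|X|$, i.e.\ $q = |X|$. Combining the three termination conditions yields $|Z| = \ell = q\cdot|Y| = |X|\cdot|Y|$. For the converse, given $X,Y,Z$ over pairwise-distinct letters (distinct from $a,b,c$) with $|Z| = |X|\cdot|Y|$, taking $W := (aXbYcZ)^{|Z|}$ satisfies $UW = WU$ and is accepted by $\mathcal{R}$, so the whole system of constraints is satisfiable precisely on the intended triples. This is a direct variant of the gadgets behind Lemmas~\ref{lem:register_automata_all_letters_different} and~\ref{lem:register_automata_addition}.

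The main obstacle is making the three tracks march through the blocks $U$ in genuine lockstep and ensuring that ``all three tracks terminate exactly at the marker $a$'' translates into the \emph{equalities} $\ell=|Z|$ and $\ell=|X|\cdot|Y|$ rather than into one-sided inequalities. This requires careful accounting of the off-by-one relationships between ``blocks consumed'', ``passes completed'' and ``cycles completed'' (the same $\pm 1$ bookkeeping already implicit in Lemma~\ref{lem:register_automata_all_letters_different}), a precise description of how the $X$-track retains its place while idle between ticks, and a check that a spurious early repetition inside any of $X$, $Y$, $Z$ still causes rejection on the corresponding track. The enabling facts are that the delimiters $a,b,c$ make pass/cycle boundaries locally recognizable, and that distinctness of the letters of $U$ makes every ``relocate the stored letter in the next block'' step deterministic, so a constant number of registers suffices throughout.
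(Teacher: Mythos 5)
Your proposal is correct and follows essentially the same route as the paper's proof: the commutation equation $UW=WU$ with $U=aXbYcZ$ primitive to force $W=U^{\ell}$, followed by a single register automaton running three parallel scanners in the style of Lemma~\ref{lem:register_automata_all_letters_different}, where the $Z$-track pins $\ell=|Z|$, the $Y$-track resets every $|Y|$ passes, and the $X$-track advances one pass per $Y$-reset to pin the number of resets at $|X|$. The only cosmetic difference is how the constants $a,b,c$ are made available to the automaton (the paper prefixes $abc$ to the scanned word, you enforce single occurrence in $U$), which does not change the argument.
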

\begin{proof}
	The proof is similar as in Lemma~\ref{lem:register_automata_all_letters_different}:
	Choose constant $a, b, c$, take new variables $X', Y', Z', W$, using regular constraints enforce that $a,b,c$ do not appear in $X, Y, Z$.
	Impose a sequence constraint
	$$
	(aXbYcZ) W = W (aXbYcZ)
	$$
	as in Lemma~\ref{lem:register_automata_all_letters_different} this implies that $W = (aXbYcZ)^k$ for some $k$.
	
	We construct a register automaton, which has three ``components'',
	working in parallel;
	each is similar to the automaton from Lemma~\ref{lem:register_automata_all_letters_different}.
	Formally the constraint is on sequence $abc W a$,
	so that constants $a, b, c$ are known to it and there is an $a$-terminator at the end.

	The first component works as Lemma~\ref{lem:register_automata_all_letters_different} for $Z$:
	i.e.\ it scans consecutive copies of $Z$, but it ignores letters between $a, c$ (including $c$, excluding $a$). Hence, it enforces that $k = |Z|$.
	
	The second component works similarly for $Y$ (ignoring letters between $c$ and $b$),
	but once it finds $c$ directly after a letter it stores in the register,
	it waits for $a$ to appear, on which it goes to an accepting state
	and then restarts, i.e.\ it waits for $b$ and then starts the computation again.
	Hence it is in an accepting state for each $(aXbYcZ)^{\ell|Y|}a$.
	We refer to the computation between accepting states as full pass,
	i.e.\ one full pass corresponds to $|Y|$ copies of $Y$ read.
	
	The third component performs the computation for $X$, but does one pass for $X$
	and then waits for the component responsible for $Y$ to go to an accepting state,
	in which case it makes another pass, and so on.
	Hence, it makes one pass per full pass for $Y$.
	Once an accepting state is reached, it remains there until another full pas of $Y$ is completed. Afterwards, it goes to a rejecting state.
	In this way, we ensure that $k = |X| \cdot |Y|$,
	hence $|Z| = |X| \cdot |Y|$.
	\qed
\end{proof}

Lemmata~\ref{lem:register_automata_all_letters_different}, \ref{lem:register_automata_addition}, \ref{lem:register_automata_multiplication}
straightforwardly allow encoding the Hilbert's tenth problem,
and so show the undecidability of sequence constraints and register constraints over infinite domains.



\end{document}